\title{}
\date{} 
\author{}
\theoremstyle{plain}
\newtheorem{thm}{Theorem}[section] 
\newtheorem{lem}[thm]{Lemma} 
\newtheorem{prop}[thm]{Proposition}
\theoremstyle{definition}
\theoremstyle{remark}
\newcommand{\R}{\mathbb{R}}
\newcommand{\dd}{\,\mathrm{d}}
\newcommand{\cL}{\mathcal{L}}
\newcommand{\cU}{\mathcal{U}}
\newcommand{\T}{\mathbb{T}}
\lstdefinestyle{mystyle}{
  language=Matlab,
  basicstyle=\ttfamily\small,
  commentstyle=\color{green!40!black},
  keywordstyle=\color{blue},
  numberstyle=\tiny\color{gray},
  numbers=left,
  stepnumber=1,
  backgroundcolor=\color{gray!5},
  frame=single,
  breaklines=true,
  breakatwhitespace=true,
  tabsize=2,
  morekeywords={optsSDE,default},
}
\title{Stability of stationary states for mean field models with multichromatic interaction potentials}
\author{Benedetta Bertoli$^1$, Benjamin D. Goddard$^2$, Grigorios A. Pavliotis$^1$}
\begin{document}

\maketitle

{\small \noindent $^1$Department of Mathematics, Imperial College London, London SW7 2AZ, UK.
\\
benedetta.bertoli21@imperial.ac.uk and pavl@ic.ac.uk
}

{\small \noindent $^2$School of Mathematics and the Maxwell Institute for Mathematical Sciences, University of Edinburgh, Edinburgh EH9 3FD, UK. b.goddard@ed.ac.uk}

\begin{abstract}

We consider weakly interacting diffusions on the torus, for multichromatic interaction potentials. We consider interaction potentials that are not $H$-stable, leading to phase transitions in the mean field limit. We show that the mean field dynamics can exhibit multipeak stationary states, where the number of peaks is related to the number of nonzero Fourier modes of the interaction. We also consider the effect of a confining potential on the structure of non-uniform steady states. We approach the problem by means of analysis, perturbation theory and numerical simulations for the interacting particle systems and the PDEs.

\end{abstract}

\section{Introduction}\label{sec:intro}

Nonlinear and nonlocal Fokker-Planck type equations, i.e. {McKean-Vlasov} equations appear in several applications, including stellar dynamics~\cite{BinneyTremaine2008}, plasma physics~\cite{balescu97}, mathematical biology~\cite{Painter_al_2024, primi2009mass}, active matter~\cite{peruani2008mean}, biophysics~\cite{frank04}[Sec 5.3], nematic liquid crystals~\cite{Constantin_al_2004} and models for opinion formation~\cite{Goddard_al_2021}. Such PDEs can exhibit nontrivial, i.e. non-uniform, stationary states, describing collective behaviour and the emergence of coherent structures, as an effect of interactions between agents at the microscale. In recent years, great progress has been made towards the understanding of the emergence of such collective behaviour. The purpose of this paper is to study the creation and stability of multimodal/multipeak stationary states for 
{McKean-Vlasov} equations on the torus.

In this paper, we will consider 
{McKean-Vlasov} equations of the form
\begin{equation}\label{e:mckean_external}
\frac{\partial \rho}{\partial t} = \beta^{-1} \Delta \rho + \nabla \cdot (\nabla V \rho) + {\kappa} \nabla \cdot ( ( \nabla W \star \rho) \rho), \quad \rho(x,0) = \rho_0(x),
\end{equation}
on $\mathbb{T}:=[0, 2 \pi]$ with periodic boundary conditions.  Here $\rho$ denotes the density/distribution function, $\rho_0$ the initial condition, $\beta$ the inverse temperature, {$\kappa$ the interaction strength}, $V$ and $W$ the confining and (symmetric) interaction potentials, respectively,  
{and $\star$ the convolution operator.}
Examples of dynamics described by~\eqref{e:mckean_external} are the Haken-Kelso-Bunz model from biophysics~\cite{frank04} with $V(x) = -\alpha \cos(x) - \gamma \cos(2 x)$ and $W(x) = - \cos(x)$, where $\alpha, \, \gamma$ are constants. We also mention the XY ($O(2)$) model with an external magnetic field that was studied in~\cite{MGRGGP2020}, corresponding to $V(x) = -\alpha \cos(x)$ and $W(x) = - \cos(x)$, as well as the  the noisy Kuramoto/Brownian mean field model~\cite{bertini2010dynamical, chavanis2014brownian}, $V \equiv 0$ and $W(x) = -\cos(x)$ and the noisy Hegselmann-Krause model for opinion dynamics~\cite{Goddard_al_2021, garnier2017consensus}. Several additional examples can be found in~\cite[Sec. 6]{CGPS2020}. Very nice presentations of the nonlinear McKean-Vlasov equation on the torus from a theoretical physics perspective can be found in~\cite[Sec. 5.3]{frank04} and~\cite{chavanis2014brownian}.

As is well known~\cite{CGPS2020}, the McKean-Vlasov PDE has a gradient structure:
\begin{equation}\label{e:grad_flow}
\partial_t \rho = \nabla \cdot \left(\rho \nabla \frac{\delta \mathcal{F}}{\delta \rho} \right),
\end{equation}
where $\mathcal{F}$ denotes the free energy
\begin{align}\label{e:free_en}
    \mathcal{F}(\rho, {\beta, \kappa}) = \beta^{-1} \int_{\mathbb{T}} \rho(x) \log (\rho(x)) \dd x + \int_{\mathbb{T}} V(x) \rho(x) \dd x + \frac{{\kappa}}{2} \int_{\mathbb{T}^d} \int_{\mathbb{T}} W(x-y) \rho(x) \rho(y) \dd x \dd y. 
\end{align}
Stationary states of the mean field dynamics can be characterized as critical points of the free energy~\cite{CGPS2020}. The main goal of this paper is to study the dynamical stability of  such states, in particular of stationary states that describe collective, organized behaviour.

Stationary states of the McKean-Vlasov PDE satisfy the Kirkwood-Monroe/generalized Lane-Emden integral equation~\cite{Bavaud_1991, CGPS2020, frank04, LuVuk_2010}
\begin{equation}\label{e:lane_emden}
\rho = \frac{1}{Z} e^{-\beta (V + {\kappa} W \star \rho)}, \quad Z = \int_{\mathbb{T}} e^{-\beta (V + {\kappa} W \star \rho)} \, \dd x.
\end{equation}
In the absence of an external potential, the uniform distribution, describing the disordered state, is always a stationary state of the {McKean-Vlasov} equation~\eqref{e:mckean_external}. Collective, organized behaviour, described by localized or multipeak solutions, becomes possible when the disordered state becomes unstable. 

As is well-known, the {McKean-Vlasov} equation~\eqref{e:mckean_external}, arises in the mean field limit of a system of weakly interacting diffusions~\cite{CDP2020, oelschlager1984martingale, chavanis2014brownian, MartzelAslangul2001}. In particular, we consider a system of interacting diffusions of the form: 
\begin{equation}\label{e:IPS}
    \dd x_i(t) = -\nabla V(x_i) \dd t -\frac{{\kappa}}{N} \sum_{j=1}^N \nabla W(x_i - x_j) \dd t + \sqrt{2 \beta^{-1}} \dd B_i (t), \quad x_i(0) \sim \rho_0,
\end{equation}
for $1 \leq i \leq N$, where $B_i(t)$ denote standard, 
independent Brownian motions.
{For simplicity, we have assumed chaotic initial conditions, i.e. the $N$ particles are independent and identically distributed at time $t = 0$.}
Under appropriate assumptions on the confining and interaction potentials, the sequence of empirical measures $\rho^N := \frac{1}{N} \sum_{i=1}^N \delta_{x_i(t)} $ converges to the solution of the mean field PDE~\eqref{e:mckean_external} {as $N \rightarrow \infty$}. Equivalently, the $N$-particle distribution function for the interacting particle system~\eqref{e:IPS} can be written as $\rho^N(x_1,\dots x_N,t) \approx \Pi_{j=1}^N \rho(x_j,t)$. Rigorous convergence results, either at the level of the empirical measure or of the product measure structure of the $N$-particle distribution function, are by now well-estabilished and we refer to, e.g.,~\cite{oelschlager1984martingale, Lacker_2023, CDP2020}. 






\subsection{Literature Review}

There is extensive literature on the calculation of stationary states for the McKean-Vlasov PDE and on the study of their stability as well as on applications to mathematical biology, in particular mass-selection in alignment models with non-deterministic effects and in active matter. The number of stationary states of the McKean-Vlasov PDE and their stability has been studied extensively, either by studying the Kirkwood-Monroe map~\cite{Bavaud_1991}, or by studying critical points of the free energy functional~\cite{LuVuk_2010}, or by studying the stationary McKean-Vlasov PDE.\footnote{All these approaches are, of course, equivalent; see~\cite[Prop. 2.4]{CGPS2020}.} The existence and stability of multipeak solutions, the problem that we will primarily focus on in this paper, was studied in \cite{primi2009mass, geigant2012stability} using PDEs/ODEs techniques. In \cite{primi2009mass}, the existence of a 2-peak steady state is proved under appropriate assumptions on the interaction potential. The stability of steady states was investigated numerically; the simulations presented in {\cite{primi2009mass}} suggest that only $1$ and $2$-peak steady states can be stable, while solutions with $4$ peaks are always unstable. The work in~\cite{geigant2012stability} builds on these results. 
Bifurcation theory for the stationary McKean-Vlasov equation for multichromatic interaction potentials was recently analysed in \cite{vukadinovic2023phase}, {where} the Hodgkin-Huxley oscillator model is studied, with an interaction potential consisting of two Fourier modes with opposite sign, $W(\theta) = - \cos(\theta) + \varepsilon \cos(2(\theta-\alpha))$, $\varepsilon \geq 0, \alpha \in (-\pi/2, \pi/2)$. 
Critical points of the free energy functional for the Onsager model for liquid crystals, $W(x) =  |\sin(x)|$ have been studied extensively.  See, e.g.~\cite{LuVuk_2010, FS_2005a, FS_2005b} and the references therein.

A quite comprehensive theory of bifurcations from the uniform distribution and of phase transitions for the McKean-Vlasov PDE on the torus was developed in~\cite{CGPS2020}. The goal of the present paper is to study, by means of analysis, systematic perturbation theory, and numerical simulations, the stability of non-uniform states, and in particular of multipeak solutions. The stability of the non-uniform state for the noisy Kuramoto model was studied using spectral theoretic arguments in~\cite{bertini2010dynamical}. One of the goals of the present study is to extend the analysis from this paper to multichromatic interaction potentials.


\subsection{Our Contributions}

In this paper we consider the stability of multipeak stationary solutions for the one-dimensional McKean-Vlasov PDE, both in the presence or absence of a confining potential. Our main contributions are the following:

\begin{itemize}

\item We provide a detailed stability analysis of the non-uniform state for multichromatic potentials.

\item In particular, we find that the uniform state changes stability at some critical value of the temperature $\beta$, and that multi-peak stationary states are unstable.

\item We calculate the eigenvalues of the linearized McKean-Vlasov operator above the {transition point}.

\item We present very detailed numerical experiments by solving the evolution PDE, the SDEs for the interacting particle system, and the eigenvalue problem for the linearized operator.

\end{itemize}

The rest of the paper is organized as follows. In Section~\ref{sec:SelfConsistency} we present the models that we will consider in this paper and we analyse the self-consistency equation(s). In Section~\ref{sec:stability_analysis} we study the stability of stationary states by either calculating the second variation of the free energy or by linearizing the McKean-Vlasov PDE. {Perturbative} results for the eigenvalues of the linearized McKean-Vlasov operator close to the bifurcation point are presented in Section~\ref{sec:perturb}. The results of extensive numerical simulations based on both the PDE and SDE formulations are shown in Section~\ref{sec:num_exp}. Conclusions and comments on future work are presented in Section~\ref{sec:conclusions}.



\section{Set-up and self-consistency equations}
\label{sec:SelfConsistency}

\subsection{Phase transitions, stability analysis and the self-consistency equation}

For $H$-stable potentials, i.e. interaction potentials with non-negative Fourier coefficients, and in the absence of a confining potential, the free energy functional is convex and the uniform distribution is the unique, globally stable stationary state~\cite{Bavaud_1991}.

For monochromatic interaction potentials of the form $W(x) = -\cos(k x), k \in \mathbb{N}$, and in the absence of a confining potential, a detailed characterization of stationary states was given in~\cite{CGPS2020}. In particular, we have the following:

\begin{prop}~\label{prop:kura}
The generalised Kuramoto model $W(x)=-\cos(k x)$, for some $k \in \mathbb{N}$, $k \neq 0$ exhibits a continuous transition point at the linear instability threshold ${\beta}_c$. For {$\beta>\beta_c$}, the equation {$\mathcal{F}(\rho,\beta, \kappa)=0$} has only two solutions in {$L^2([0,2\pi])$} (up to translations). The nontrivial one, {$\rho_{\beta}$ minimises $\mathcal{F}_\beta$ for $\beta>\beta_c$ and converges in the narrow topology as $\beta \to \infty$} to a normalised linear sum of equally weighted Dirac measures centred at the minima of $W(x)$.
\end{prop}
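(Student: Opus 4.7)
The plan is to exploit the fact that the monochromatic interaction $W(x)=-\cos(kx)$ couples only to the $k$-th Fourier mode of $\varrho$, so that the fixed-point problem \eqref{e:lane_emden} collapses to a scalar equation. Setting $c_k(\varrho)=\int_0^{2\pi}\cos(ky)\varrho(y)\,\dd y$ and $s_k(\varrho)=\int_0^{2\pi}\sin(ky)\varrho(y)\,\dd y$, a direct computation gives $W\star\varrho(x)=-r\cos(k(x-\phi))$ with $r=\sqrt{c_k^2+s_k^2}$, and by translation invariance we may set $\phi=0$. The self-consistency equation then reduces to $\varrho(x)=Z^{-1}e^{\kappa r\cos(kx)}$ together with the scalar closure
\begin{equation*}
r = R(\kappa r), \qquad R(s) := \frac{I_1(s)}{I_0(s)},
\end{equation*}
where $I_0,I_1$ are modified Bessel functions; the substitution $u=ky$ absorbs the $k$-dependence.

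The second step is to analyse the scalar equation $s=\kappa R(s)$ with $s=\kappa r$. Classical properties of $R$, namely $R(0)=0$, $R'(0)=\tfrac12$, strict concavity and monotonicity on $(0,\infty)$, and $R(s)\to 1$ as $s\to\infty$, imply that only $s=0$ solves the equation when $\kappa$ lies below the linear threshold $\kappa_\sharp$ (characterised by $\kappa_\sharp R'(0)=1$), whereas for $\kappa>\kappa_\sharp$ there is exactly one positive root $s(\kappa)$, depending continuously on $\kappa$ and vanishing as $\kappa\downarrow \kappa_\sharp$. This yields the \emph{continuous} transition at $\kappa_c=\kappa_\sharp$ and, combined with the translation reduction, enumerates the $L^2(U)$ solutions as precisely the trivial profile and the translation orbit of the nontrivial profile.

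For the variational claim I would substitute the ansatz $\varrho=Z^{-1}e^{\kappa r\cos(kx)}$ back into $\mathcal{F}_\kappa$ to obtain a scalar function $g_\kappa(r)$ whose stationary points are exactly the roots of $r-R(\kappa r)$; the concavity of $R$ then forces $g_\kappa(r(\kappa))<g_\kappa(0)$ for $\kappa>\kappa_\sharp$. Since every critical point of $\mathcal{F}_\kappa$ on probability densities must solve the Euler--Lagrange equation \eqref{e:lane_emden}, and we have classified all such critical points, the global minimiser necessarily lies on the translation orbit of $\varrho_\kappa$.

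Finally, the narrow limit as $\kappa\to\infty$ follows by Laplace asymptotics applied to $\varrho_\kappa(x)=Z_\kappa^{-1}e^{\kappa r(\kappa)\cos(kx)}$: because $r(\kappa)=R(\kappa r(\kappa))\to 1$, mass concentrates at the $k$ global maxima of $\cos(kx)$, namely the points $2\pi j/k$, $j=0,\ldots,k-1$, which are precisely the minima of $W$, and the $2\pi/k$-periodicity of $\varrho_\kappa$ enforces equal weights $1/k$. In my view the main obstacle is the global-minimisation step: ruling out that some profile not lying on the critical-point set could attain a lower value of $\mathcal{F}_\kappa$ requires the complete classification of stationary states via \eqref{e:lane_emden}, which in turn hinges on the strict concavity of $R$ producing at most one nontrivial positive root at each $\kappa$.
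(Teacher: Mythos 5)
The paper does not prove this proposition at all: it is quoted verbatim (notation $F(\varrho,\kappa)$, $\kappa_\sharp$, $\mathcal{F}_\kappa$ included) from the reference \cite{CGPS2020}, where the generalised Kuramoto model is treated as a worked example. Your reconstruction follows essentially the same route as that reference (and the earlier Kuramoto literature, e.g.\ \cite{bertini2010dynamical}): reduction of \eqref{e:lane_emden} to the scalar closure $r=I_1(\kappa r)/I_0(\kappa r)$ with the harmonic $k$ scaled out, uniqueness of the positive root above the linear threshold, a free-energy comparison along the one-parameter exponential family, and Laplace asymptotics for the $\kappa\to\infty$ concentration at the $k$ minima of $W$ with equal weights by $2\pi/k$-periodicity. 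Two points you gloss over deserve explicit treatment. First, the step ``the global minimiser necessarily lies on the translation orbit'' needs (i) existence of a global minimiser of $\mathcal{F}_\kappa$ over probability measures on the torus (direct method: lower semicontinuity plus compactness) and (ii) the fact that any minimiser is strictly positive and smooth, so that it genuinely satisfies the Euler--Lagrange/Kirkwood--Monroe equation \eqref{e:lane_emden}; this is exactly the content of the equivalence result invoked in the paper's footnote (\cite[Prop.~2.4]{CGPS2020}) and cannot be skipped, since otherwise a non-critical minimiser is not excluded. Second, the crux of the uniqueness of the nontrivial branch is the claimed strict concavity of $R=I_1/I_0$; what one actually needs (and what is classically proved via Bessel-function identities or Turán-type inequalities) is that $s\mapsto R(s)/s$ is strictly decreasing on $(0,\infty)$, which together with $R'(0)=\tfrac12$, $R<1$ gives exactly one positive root of $s=\kappa R(s)$ for $\kappa>2$ and none for $\kappa\le 2$; stating strict concavity without proof leaves the key analytic input unjustified, though it is a known fact. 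With these two standard ingredients supplied, your argument is correct, including the computation that along the family $\varrho_r\propto e^{\kappa r\cos(kx)}$ the derivative of the restricted free energy is proportional to $r-R(\kappa r)$, which yields $g_\kappa(r(\kappa))<g_\kappa(0)$ above the threshold and hence the continuity of the transition at $\kappa_c=\kappa_\sharp$.
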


We will consider even confining and interaction potentials with a finite number of non-zero Fourier modes:
\begin{align*}
    V(x) = \sum_{k=1}^{{n'}} v_k \cos(k x), \quad \mbox{and} \quad  W(x) =\sum_{k=1}^n a_k \cos(k x), 
\end{align*}
 with $\{ v_k\}_{k=1}^{{n'}}$ real-valued and $\{ a_k\}_{k=1}^n$ non-positive. {There are several applications, for example in biophysics, polymer dynamics, liquid crystals, synchronization and mathematical biology where the McKean-Vlasov PDE and the corresponding elliptic problem for the stationary state(s) is posed on the torus. These potentials are of interest for several reasons. Firstly, they provide a generalisation of the noisy Kuramoto model, which is useful in real-world applications (see, for example, \cite{PhysRevLett.131.158303}). From a mathematical perspective, as the McKean-Vlasov equation we consider is posed on the torus with periodic boundary conditions, it is natural to consider potentials which are periodic functions of $x$. Moreover, as the general potentials of interest in this type of equations are even functions, they can be expressed in terms of cosine Fourier series, of which $W$ and $V$ as above represent finite truncations.}


The main objective of this paper is to study the nature and stability of stationary states of the McKean-Vlasov PDE by analysing the Kirkwood-Monroe integral equation~\eqref{e:lane_emden}. 
We reiterate that, for nontrivial confining potentials $V(x)$, the uniform state is no longer a stationary state. 
In particular, if $W(x) = \sum_{k=1}^n a_k \cos(kx)$, $V(x) = \sum_{k=1}^{{n'}} v_k \cos(kx)$, then {every stationary state can be written as}:
\begin{equation}\label{e:stationary_state}
    {\rho_{\infty}}(x) = \frac{1}{Z} \exp \left( -\beta \sum_{k=1}^{\max\{n, {n'}\}} (v_k + r_k a_k) \cos(kx) \right),
\end{equation}
where $Z = \int_0^{2\pi} \exp \left( \beta \sum_{k=1}^{\max\{{n'},n\}} (v_k + r_k a_k) \cos(kx) \right)$, and we set $v_k = 0$ for $k>{n'}$, and $a_k = 0$ for $k>n$.

The Fourier coefficients of $\rho$, $r_k = \int_0^{2\pi} \cos(kx) \rho(x) \dd x$, solve the self-consistency equations:
\begin{align*}
    r_l = \frac{1}{Z} \int_0^{2\pi} \cos(lx) \exp \left( -\beta \sum_{k=1}^{\max\{{n'},n\}} (v_k + r_k a_k) \cos(kx) \right) {\dd x},
\end{align*}
for $1 \leq l \leq \max\{{n'},n\}$

These self-consistency equations, in the absence of a confining potential, were first derived in~\cite{Battle1977}. {For $V=0$, $r_l = 0$ for all $1 \leq l \leq n$ is always a solution, corresponding to the uniform stationary state $\rho_{\infty} = \frac{1}{2\pi}$}.

{Beyond the uniform steady state}, we are interested in the number of critical points of the steady states of the McKean-Vlasov equation (\ref{e:mckean_external}).  When there is no external potential, it is conceivable that one may relate this to the number of Fourier modes of the interaction potential.  However, as we show below, even if one fixes $W$, then this is not possible for general external potentials.  As a particular example, one may fix $W$ and (smooth, non-negative) $\rho$ and then choose $V$ such that $\rho$ is a steady state, irrespective of the number of peaks it contains.
\begin{lem}
Let $\hat{\rho}$ be a smooth, non-negative, $2\pi$-periodic function and fix the interaction kernel $W$.  Then there exists an external potential, $V$, such that $\hat{\rho}$ is an equilibrium of the McKean-Vlasov equation (\ref{e:mckean_external}).
\end{lem}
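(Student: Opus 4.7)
The plan is to invert the Kirkwood-Monroe relation~\eqref{e:lane_emden}, treating $\hat{\rho}$ as given and solving for $V$. Equivalently, using the gradient flow structure~\eqref{e:grad_flow}, a smooth positive density $\hat{\rho}$ is stationary for~\eqref{e:mckean_external} precisely when the chemical potential
\begin{equation*}
\mu[\hat{\rho}](x) \;=\; \beta^{-1}\log \hat{\rho}(x) \,+\, V(x) \,+\, (W\star \hat{\rho})(x)
\end{equation*}
satisfies $\nabla\!\cdot(\hat{\rho}\,\nabla \mu[\hat{\rho}]) = 0$. On the torus, the simplest way to enforce this is to pick $V$ so that $\mu[\hat{\rho}]$ is spatially constant; this is a direct algebraic problem rather than a PDE problem.

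The construction is then immediate: I would set
\begin{equation*}
V(x) \;:=\; -\beta^{-1}\log \hat{\rho}(x) \,-\, (W\star \hat{\rho})(x).
\end{equation*}
Since $W$ is fixed and $\hat{\rho}$ is smooth and $2\pi$-periodic, $W\star\hat{\rho}$ is smooth and $2\pi$-periodic, and the same holds for $\log\hat{\rho}$ provided $\hat{\rho}$ is strictly positive, so $V$ is a legitimate external potential. With this choice, $\mu[\hat{\rho}] \equiv 0$, hence \eqref{e:mckean_external} rewrites as $\nabla\!\cdot(\hat{\rho}\,\nabla \mu[\hat{\rho}]) = 0$. Equivalently, exponentiating the definition yields $\hat{\rho} = \exp\!\big(-\beta(V + W\star \hat{\rho})\big)$, and adjusting $V$ by an (irrelevant) additive constant $\beta^{-1}\log Z$ gives back~\eqref{e:lane_emden} with the correct normalization.

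There is essentially no analytic obstacle in this argument; it is a one-line construction once the correct viewpoint (chemical potential) is adopted. The only mildly delicate point is the positivity requirement on $\hat{\rho}$, without which $\log\hat{\rho}$, and hence $V$, fails to be continuous. If $\hat{\rho}$ vanishes on a set of positive measure one would need to allow singular potentials or restrict attention to $\{\hat{\rho}>0\}$; for the intended purpose of the lemma, namely to warn against relating the number of peaks of a steady state to the number of Fourier modes of $W$, the strictly positive case suffices, since bumps of arbitrary multiplicity can be prescribed and matched by the corresponding $V$.
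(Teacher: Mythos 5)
Your construction is correct and is essentially the paper's own proof: both invert the stationarity (Kirkwood--Monroe) relation by choosing $V = -\beta^{-1}\log\hat{\rho} - W\star\hat{\rho}$ up to an additive constant, you merely phrasing it through the constancy of the chemical potential in the gradient-flow form \eqref{e:grad_flow}. Your explicit formula in fact fixes a sign slip in the paper's final line (which states $V = f + (W\star\rho)(x) - c$ instead of $V = f - (W\star\rho) + c$ with $\hat{\rho} = e^{-\beta f}$), and your remark that strict positivity of $\hat{\rho}$ is needed for $\log\hat{\rho}$ to make sense is a point the paper glosses over by writing only ``non-negative''.
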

\begin{proof}
By the non-negativity of $\hat{\rho}$, we may write 
$\hat{\rho}(x) = \exp\big(- \beta f(x)\big)$ for some $f(x)$, which is fixed by the choice of $\hat{\rho}$.
Similarly, by (\ref{e:lane_emden}), if 
$\hat{\rho}(x) = \exp\big[ -\beta \big( V + W\star {\hat\rho} - \ln(Z) \big)  \big]$, then $\hat{\rho}$ is an  equilibrium of (\ref{e:mckean_external}).  Note that, for convenience, we have written the normalization constant in the exponent.
Hence it is clear that we want $f(x) = V(x) + (W\star {\hat{\rho}})(x) - \ln(Z)$, which can be achieved by taking $V(x) = f(x) + (W\star{\hat{\rho}})(x) - c$, where $c$ is a constant chosen to ensure the correct normalization.
\end{proof}

We note that, in fact, the maximum number of non-zero Fourier modes required to produce an equilibrium distribution with $M$ non-zero Fourier modes under a kernel with $n$ non-zero Fourier modes is $\max\{n,M \}$. 
{\subsection{Examples}}
{We now present some examples of particle systems and their corresponding self-consistency equations, all of which belong to the class of equations studied here.}

{ \bf Kuramoto model (monochromatic interactions)}
Firstly, there are many well-known results for the interaction potential $W(x) = - \cos(x)$ {(and $V(x)\equiv 0$)}, which gives rise to the Kuramoto model.  In particular, it is known (see \cite{bertini2010dynamical}) that the McKean-Vlasov PDE for the Kuramoto model undergoes a phase transition. This means that there exists a critical value $\beta_c$ of the inverse temperature $\beta$ such that, for $\beta < \beta_c$, the PDE admits a unique stationary solution (the uniform state $1/(2\pi)$), while for $\beta > \beta_c$, the PDE has multiple stationary states. Each of these stationary solutions can be written as $q(x) = \rho(x + x_0)$ for some $x_0 \in (0,2\pi)$, where 
\begin{align*}
    \rho(x) = \frac{1}{Z} \exp(-\beta r \cos(x)),
\end{align*}    
with $Z = \int_0^{2\pi}\exp(-\beta r \cos(y)) \dd y$. Here $r$ is a solution of the self-consistency equation:
\begin{align}\label{e:self_consist_single}
    r = \frac{\int_0^{2\pi} \cos(x) \exp(-\beta r \cos(x)) \dd x}{\int_0^{2\pi} \exp(-\beta r \cos(x)) \dd x}.
\end{align}

{ \bf The noisy Kuramoto model with an external potential} As an example in which the self-consistency equation can be solved analytically, we consider the following Brownian mean field model in a magnetic field studied in~\cite{MGRGGP2020}, for the confining and interaction potentials $V(x) =-\eta \cos(x)$, $W(x) = -a \cos(x)$. This can be written as an unbounded spin system with a Hamiltonian, where for ${\bf s}_i = (\cos(x), \sin(x))$
\begin{equation}\label{e:BMF_magnetic}
H = -\sum_{i=1}^N {\bf J} \cdot {\bf s}_i - \frac{2}{N} \sum_{i,j=1}^N {\bf s}_i \cdot {\bf s}_j, 
\end{equation}
with ${\bf J} = - \eta (1,0)$. In this case we can calculate the stationary states analytically~\cite{MGRGGP2020}: for $\beta < \beta_c$ there exists a unique steady state
\begin{equation}\label{e:inv_meas_magnetic}
\rho_{min}(x) = \frac{1}{Z_{min}} e^{a^{min} \cos(2 \pi x)}, \quad Z_{min} = \int_{\T} e^{a^{min} \cos(2 \pi x)} {\dd x},
\end{equation}

Furthermore, for $\beta > \beta_c$ there exist at least two steady states; in addition to~\eqref{e:inv_meas_magnetic}, we have
\begin{equation}\label{e:inv_meas_magnetic_beta}
\rho_{*}(x) = \frac{1}{Z_{min}} e^{a^{*} \cos(2 \pi x)}, \quad Z_{*} = \int_{\T} e^{a^{*} \cos(2 \pi x)} {\dd x},
\end{equation}
for some $a^{*} = a^{*}(\beta)$, with $a^* < 0 < a^{min}$. Here $\rho^{min}$ is the unique minimizer and $\rho^*$ is a non-minimizing critical point  of the periodic mean field energy. 

{ \bf Multichromatic interactions} Our aim is to extend these types of results to more general interaction potentials $W$ with a varying number of Fourier modes, such as the bichromatic potential $W(x) = a_1 \cos(x) + a_2 \cos(2x)$, $a_1, a_2 < 0$. We address the question of existence and uniqueness of stationary states and their stability.
{By \eqref{e:stationary_state}}, invariant measures for the corresponding McKean-Vlasov PDE are given by:
\begin{equation}\label{stationaryrho}
  \rho(x) = \frac{\exp(-\beta(a_1 r_1 \cos(x) + a_2 r_2 \cos(2x)))}{\int_0^{2\pi}\exp(-\beta(a_1 r_1 \cos(y) + a_2 r_2 \cos(2y))) \dd y},
\end{equation}
with $r_1$ and $r_2$ satisfying the self-consistency equations:
\begin{align*}
  r_1 = \frac{\int_0^{2\pi} \cos(x)\exp(-\beta(a_1 r_1 \cos(x) + a_2 r_2 \cos(2x))) \dd x}{\int_0^{2\pi} \exp(-\beta(a_1 r_1 \cos(y) + a_2 r_2 \cos(2y))) \dd y},
\end{align*}
and
\begin{align*}
  r_2 = \frac{\int_0^{2\pi} \cos(2x)\exp(-\beta(a_1 r_1 \cos(x) + a_2 r_2 \cos(2x))) \dd x}{\int_0^{2\pi} \exp(-\beta( a_1 r_1 \cos(y) + a_2 r_2 \cos(2y))) \dd y}.
\end{align*}

In this case, it is harder to rigorously deduce results about $r_1$ and $r_2$. However, as this potential only induces two self-consistency equations, we can solve these numerically. We take, for example, $a_1 = a_2 = -1$.
It is easy to notice that for any value of $\beta$, $r_1 = r_2 = 0$ is a solution, which corresponds to the uniform stationary state $\rho = 1/(2\pi)$. Numerical experiments indicate that for $\beta < 2$, this is the only solution. However, when $\beta > 2$, we start seeing two other pairs of solutions $(r_1, r_2)$; substituting these into (\ref{stationaryrho}) gives us three different steady states. In Figure~\ref{fig:steadystates_beta} we show these steady states for for $\beta = 3$ and $\beta = 10$, along with the corresponding values for $r_1$ and $r_2$.
\begin{figure}[H]
  \centering
  \begin{minipage}[b]{0.49\textwidth}
    \includegraphics[width=\textwidth]{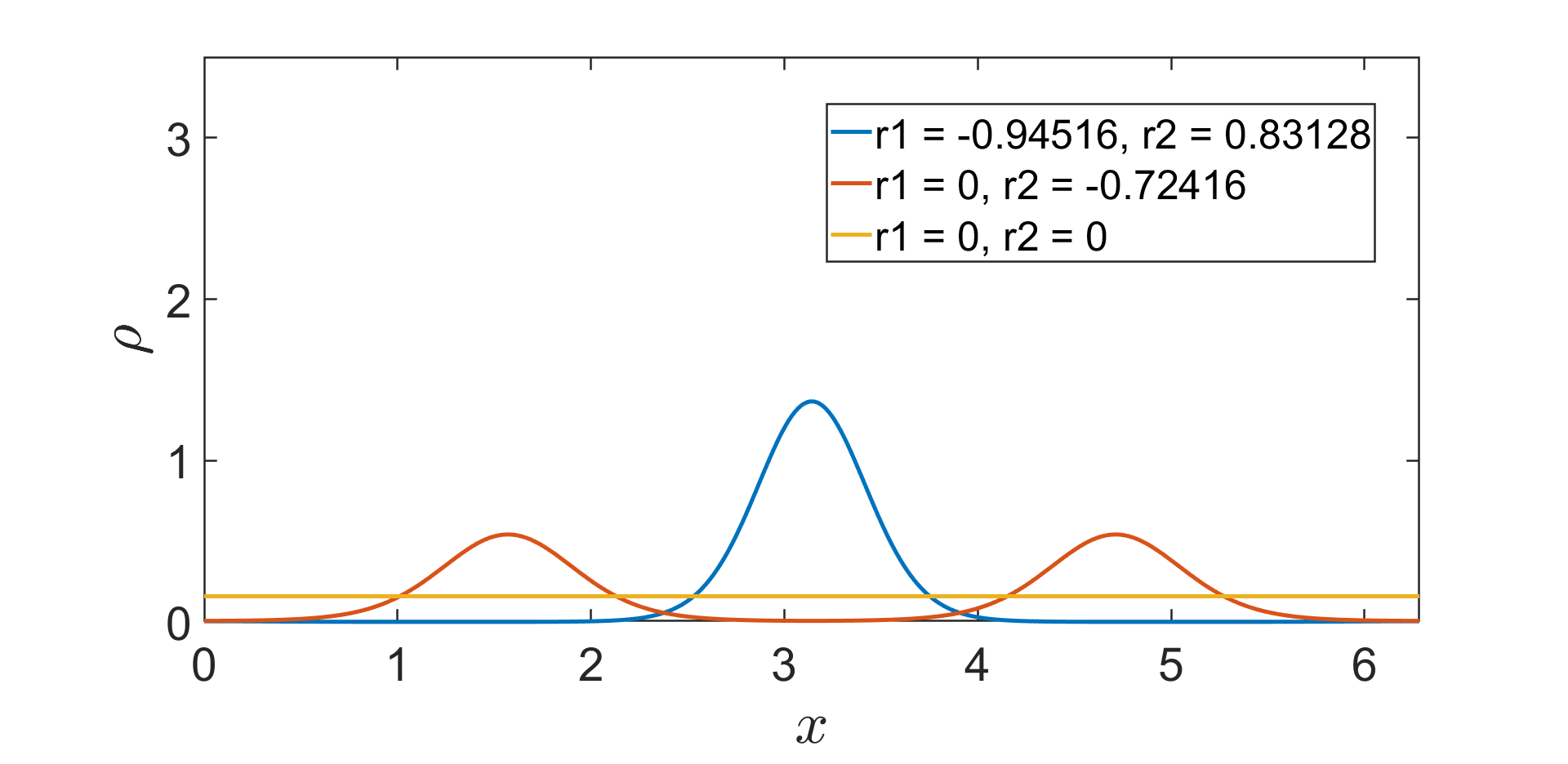}
  \end{minipage}
  \hfill
  \begin{minipage}[b]{0.49\textwidth}
    \includegraphics[width=\textwidth]{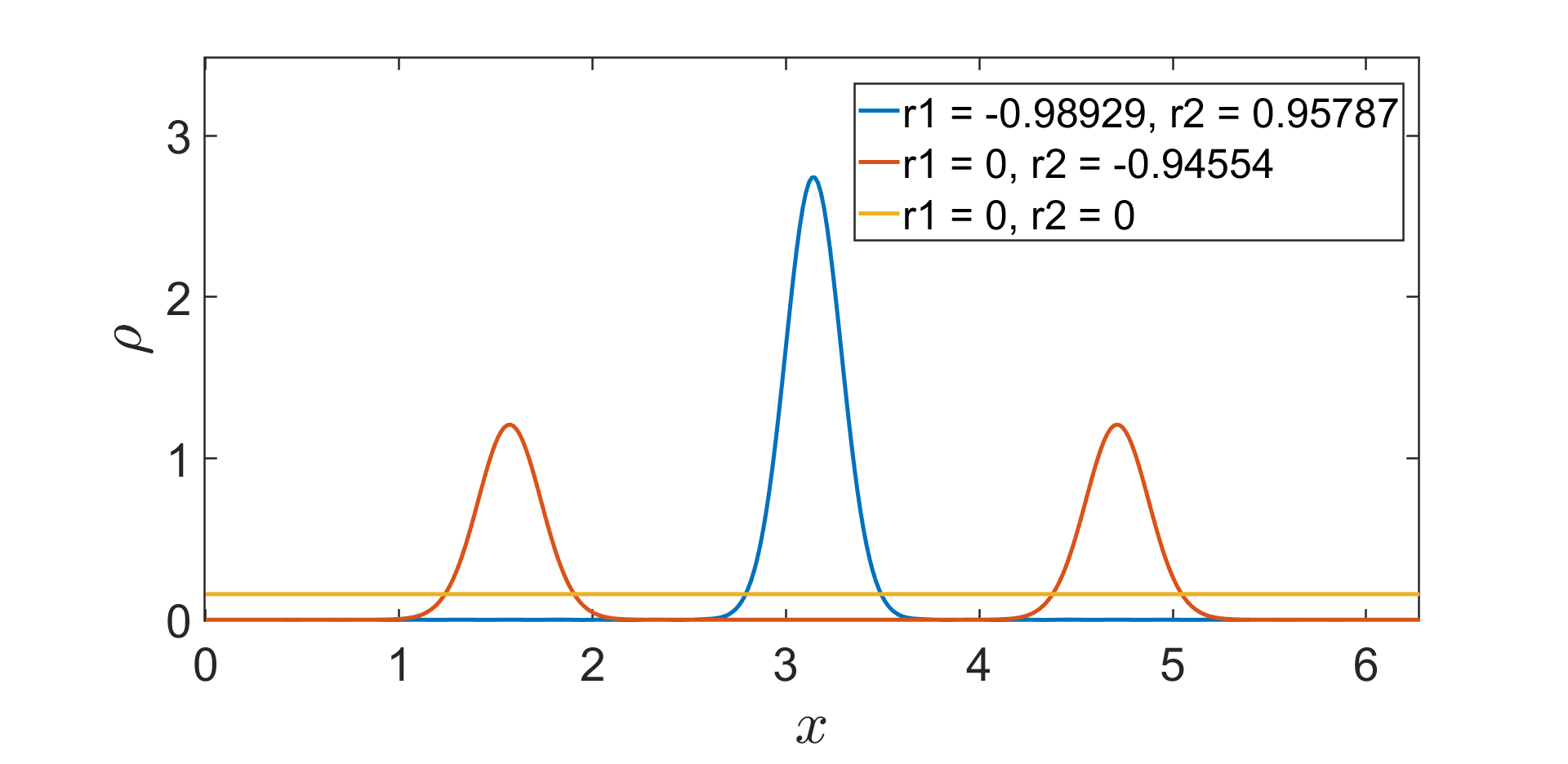}
  \end{minipage}
    \caption{Probability distributions corresponding to the three different solutions {of the $(r_1, r_2)$ equations for the bichromatic system with} $\beta = 3$ (left) and $\beta = 10$ (right).}
    \label{fig:steadystates_beta}
\end{figure}
Note that, alongside the uniform distribution, there is a one-peak steady state (as also seen for the Kuramoto model), and we now have a further solution with two peaks, reflecting the multichromatic nature of the interaction potential.

In the following section we aim to address the question of stability of such solutions.

\section{Stability analysis of the uniform state}
\label{sec:stability_analysis}

{The goal of this section is to evaluate the critical temperature at which the systems presented above undergo a phase transition. Here, we assume that there is no confining potential, i.e., $V=0$. In this case, the McKean-Vlasov equation \eqref{e:mckean_external} simplifies to:
\begin{align*}
    \frac{\partial \rho}{\partial t} = \beta^{-1} \Delta \rho + \kappa \nabla \cdot ( ( \nabla W \star \rho) \rho), \quad \rho(x,0) = \rho_0(x),
\end{align*}
As one of the parameters $\kappa, \beta$ is redundant when considering steady states, we also set the interaction strength $\kappa = 1$ whenever $V=0$, and focus on finding the phase transition in terms of $\beta$ only. As before, for interaction potentials that are not $H-$stable we expect a critical value of $\beta$ above which the McKean-Vlasov PDE admits stationary solutions other than the uniform state. The key result is that the critical value of $\beta$ is:
\begin{align*}
\beta_c = \min_{1 \leq k \leq n} \left\{ -\frac{2}{a_k} : 1 \leq k \leq n \right\}.
\end{align*}
In particular, the critical value is not influenced by the number of Fourier modes of the interaction potential, nor by which Fourier mode has the highest coefficient, but rather by the magnitude of the coefficient itself.}
We will identify this critical temperature in three ways: firstly we analyse the second variation of the free energy, then we perform a linear stability analysis of the PDE, and finally we analyse the problem numerically. These methods will also give us an insight on the stability of the states we find.

\subsection{Second variation of the free energy}
\label{subsec_free_energy}

The first approach we use to perform the linear stability analysis and to identify the continuous phase transition, based on the one used in \cite{chavanis2014brownian}, consists in analysing the eigenvalues of the second variation of the free energy. We first present the calculation in the absence of a confining potential. We recall that the free energy of this system is given by
\begin{align*}
    \mathcal{F}(\rho) = \frac{1}{2} \int_0^{2\pi} \int_0^{2\pi} W(x-y) \rho(x) \rho(y) \dd x \dd y + \beta^{-1} \int_0^{2\pi} \rho(x) \log (\rho(x)) \dd x .
\end{align*}
Its second variation is
\begin{equation}\label{e:second_variation_energy}
    \delta^2 \mathcal{F}[\delta \rho, \, \delta \rho] = \frac{1}{2} \int \delta \rho \delta \phi \dd x + \frac{\beta^{-1}}{2} \int \frac{(\delta \rho)^2}{\rho} \dd x ,
\end{equation}
where $\delta \rho$ is the perturbation and
\begin{align*}
    \delta \phi(x) = \int_0^{2\pi} W(x - y) \delta \rho (y) \dd y.
\end{align*}
For our interaction potential $W(x) = \sum_{k=1}^n a_k \cos(kx)$, this is equal to
\begin{align*}
    \delta^2 \mathcal{F} = (2\pi)^2 \sum_{j=1}^{\infty} \left( \beta^{-1} + \frac{a_j}{2} \right) |\delta \hat{\rho}_j|^2 ,
\end{align*}
{where $\hat{\rho}_j$ is the $j$-th Fourier coefficient of $\rho$}.
We then write
\begin{align*}
    \delta \rho = \frac{\dd q}{\dd x}, \text{ where } q(x) = \int_0^{x} \delta \rho (y, t) \dd y.
\end{align*}
This gives, for the first integral {in \eqref{e:second_variation_energy}}:
\begin{align*}
    &\frac{1}{2} \int_0^{2\pi} \delta \rho \delta \phi \dd x = 
     \frac{1}{2} \int_0^{2\pi} \int_0^{2\pi} q(x) \frac{\dd}{\dd x} \frac{\dd}{\dd y} W(x - y) q(y) \dd x \dd y,
\end{align*}

and for the second integral:
\begin{align*}
    \frac{\beta^{-1}}{2} \int_0^{2\pi} \frac{(\delta \rho)^2}{\rho} \dd x 
    &= -\frac{\beta^{-1}}{2} \int_0^{2\pi} q(x) \frac{\dd}{\dd x} \left( \frac{1}{\rho(x)} \frac{\dd}{\dd x} q(x) \right)  \dd x.
\end{align*}


Therefore, the second variation of the free energy can be written as a quadratic form/integral operator:
\begin{align*}
    \delta^2 \mathcal{F} [q, q] = \int_0^{2\pi} \int_0^{2\pi} K(x, y) q(x) q(y) \dd x \dd y,
\end{align*}
where the kernel is defined as 
$$
K(x, y) = \frac{1}{2} \sum_{k=1}^n k^2 a_k \cos(k(x-y)) - \frac{\beta^{-1}}{2} \delta(x - y) \frac{\dd}{\dd x} \left( \frac{1}{\rho} \frac{\dd}{\dd x} \right).
$$
We are therefore led to considering the eigenvalue problem:
\begin{equation}\label{evalue}
  2\pi \frac{\dd^2 q}{\dd x^2}  + \beta \int_0^{2\pi} q(y) \left( \sum_{k=1}^n k^2 a_k \cos(k(x - y)) \right) \dd y= -2\lambda q.
\end{equation}
The eigenmodes are {$q_s = A_s \sin(sx)$, where $A_s\in \mathbb{R}, s \in \mathbb{N}$}. 
Using trigonometric identities, we can compute the eigenvalues of the integral operator. We first conclude that, for $ n < |s|$, we have $\lambda_s = \pi s^2 >0$, so these modes do not induce instability. For $|s| \leq n$, we have 
\begin{align*}
  \lambda_s = - \frac{1}{2} \left( -2\pi s^2 - \beta s^2 a_s \pi \right) = {\frac{\pi s^2}{2}} \left( 2 + \beta a_s \right)
\end{align*}
This is positive for $\beta a_s < -2$, so $\beta < -\frac{a_s}{2}$. As we are concerned with the first point of linear instability, the critical temperature is $\beta_c := \min_k \{ -\frac{2}{a_k} : 1 \leq k \leq n \}$, see also~\cite{CGPS2020}. As an example, we note that by setting $a_1 = -1, a_k = 0$ for $2 \leq n$, we obtain the known result for the Kuramoto model, $\beta_c = 2$. Similarly, for the interaction potentials $W(x) = -3\cos(x) - \cos(2x)$ or $W(x) = -\cos(x) - \cos(2x) - 3\cos(3x)$, the critical temperature is $\beta_c = \frac{2}{3}$, as the critical value is not influenced by which Fourier mode has the highest coefficient, but rather by the magnitude of the coefficient itself. Critical temperatures for more examples can be found, together with the corresponding numerical simulations, in Section \ref{sec:num_exp}.

\subsection{Linearisation of the McKean-Vlasov equation}
\label{subsec:}

We can calculate the value of the critical temperature also by looking at the Fourier modes of the linearisation of the {McKean-Vlasov} PDE
\begin{align*}
    \frac{\partial \rho}{\partial t} (t,x) = \frac{\partial}{\partial x} \left[ \left( \int \rho(t,x-y)  W'(y) \dd y \right) \rho(t,x) \right] + \beta^{-1} \frac{\partial^2 \rho}{\partial x^2} (t,x).
\end{align*}

Following the method in \cite{garnier2017consensus}, we decompose $\rho = \rho_0 + \rho_1 = \frac{1}{2\pi} + \rho_1$, where $\rho_1$ is a small perturbation of $\rho$ so that $O(\rho_1^2)$ is negligible. In the Fourier domain the PDE becomes
\begin{align*}
    \frac{\partial \hat{\rho}_1(t,j)}{\partial t} = \left[ i \rho_0 j \int e^{-ijy}  W'(y) \dd y - \beta^{-1} j^2 \right] \hat{\rho_1}(t,j),
\end{align*}
where $\hat{\rho}_1(j)$ is the $j$-th Fourier coefficient of $\rho_1$. 
Therefore, the Fourier modes have growth rates
\begin{align*}
    \gamma_j = \text{Re} \left[ i \rho_0 j \int e^{-ijy} W'(y) \dd y - \beta^{-1} j^2 \right] = \rho_0 j \int \sin(jy)  W'(y) \dd y - \beta^{-1} j^2.
\end{align*}

Substituting $ W'(x) = -\sum_{k=1}^n k a_k \sin(kx)$, our problem reduces to studying the sign of:
\begin{align*}
\gamma_j =   -\frac{j}{2\pi} \int_0^{2\pi} \sin(jy) \sum_{k=1}^n k a_k \sin(ky) \dd y - \beta^{-1} j^2,
\end{align*}
as $\beta$ varies. For $j > n$, the integral term above is equal to $0$. Hence:
\begin{align*}
  \gamma_j = -\beta^{-1} j^2.
\end{align*}
As this is always negative, these modes do not induce instability. For $j \leq n$:
\begin{align*}
    \gamma_j = -\frac{1}{2\pi} \int_0^{2\pi} \sin(jy) \sum_{k=1}^n k a_k \sin(ky) \dd y - \beta^{-1} j  = -\frac{j a_j}{2} - \beta^{-1} j.
\end{align*}

This is positive for $2 \beta^{-1} <-a_j$, i.e. for $\beta > -\frac{2}{a_j}$, as expected from the calculations in the previous sections.

\section{Stability analysis of the peaked states}
\label{sec:perturb}

{Our next goal is to study the stability of non-uniform stationary states close to the critical interaction strength. One approach would be to linearize the McKean-Vlasov operator around the (non-uniform) stationary state $\rho_{\infty}$:
\begin{equation}\label{e:McK_Vl}
\cL \rho = \beta^{-1} \partial_x^2 \rho + \partial_x \left((\partial_x W \star \rho_{\infty}) \rho \right)
\end{equation}
on $[0,2 \pi]$ with periodic boundary conditions. We emphasize the fact that by linearization we mean that we consider the Fokker-Planck operator of the linear (in the sense of McKean) process that we obtain by calculating the convolution with respect to the invariant measure, see \cite{pavliotis2025linearization} for details. Alternatively, we can consider the McKean-Vlasov PDE as a "linear" PDE with a time dependent potential, transform it to a reaction-diffusion equation and then calculate the expectation in the "reaction" term with respect to the invariant measure, as done in \cite{vukadinovic2008inertial} (under the assumption of even initial conditions). 
To be consistent with the notation in existing literature, in particular with \cite{bertini2010dynamical}, we multiply this equation by $\beta$.

We follow \cite{bertini2010dynamical}[Sec. 2.5] (Eqn (2.57)) and perform the ground state transformation before the linearization to map the McKean-Vlasov operator to a nonlinear and nonlocal Schr\"{o}dinger operator. The nonlinearity and nonlocality enter
through the dependence on the Fourier modes of the stationary states that satisfy the
self-consistency equations. See also \cite{vukadinovic_2009}[Sec. 3].
The transformation is achieved by considering the rescaling $Hf := \rho_{\infty}^{-1/2} \cL(\rho_{\infty}^{1/2} f)$. 
We recall that in this case, the stationary state $\rho_{\infty}$ is given by:
\begin{align*}
    \rho_{\infty} = \frac{1}{Z} e^{-\beta \cU}
\end{align*}
where $Z$ is the renormalisation constant and $\cU := W \star \rho_{\infty}$.
We have:
\begin{align*}
    &\cL(\rho_{\infty}^{1/2} f) = \partial_{x}^2 \left( e^{-\frac{\beta \cU}{2}} f \right) + \beta \partial_x \left( (\partial_x \cU) e^{-\frac{\beta \cU}{2}} f \right) \\
    &= \partial_x \left( - \frac{\beta}{2} e^{-\frac{\beta \cU}{2}} (\partial_x \cU) f + e^{-\frac{\beta \cU}{2}} \partial_x f \right) + \beta \left( \partial_x^2 \cU e^{-\frac{\beta \cU}{2}} f - \frac{\beta}{2} | \partial_x \cU |^2 e^{-\frac{\beta \cU}{2}} + \partial_x \cU e^{-\frac{\beta \cU}{2}} \partial_x f \right) \\
    &= e^{-\frac{\beta \cU}{2}} \left( - \frac{\beta^2}{4} |\partial_x \cU |^2 f + \frac{\beta}{2} (\partial_x^2 \cU) f + \partial_x^2 f \right).
\end{align*}
Therefore
\begin{align*}
    Hf = \partial_x^2 f - \Phi f, 
\end{align*}
where
\begin{align*}
    \Phi f = \left( \frac{\beta^2}{4} | \partial_x \cU |^2 - \frac{\beta}{2} \partial_x^2 \cU \right) f.
\end{align*}
We aim to find the eigenvalues of $H$ near the critical value of $\beta$, as this will give us further information about what happens in the dynamics near the transition point. In particular, we show that all eigenvalues are negative, which implies that, close to the transition point, the system stays close to the steady state in the short term. Moreover, the main eigenvalue of interest is the largest one, as it gives us information about the spectral gap of the operator.
}
{\subsection{The Kuramoto model}}
{For the Kuramoto model, $\rho_{\infty} = \frac{1}{Z} e^{\beta r \cos(x)}$, where $Z$ is the renormalization factor. In this case:
\begin{align*}
    \cU &= -\frac{1}{Z} \int_0^{2\pi} \cos(x-y) e^{\beta r \cos(y)} \dd y 
    = -\frac{1}{Z} \cos(x) \int_0^{2\pi} \cos(y) e^{\beta r \cos(y)} \dd y \\
    &= -r \cos(x).
\end{align*}
Therefore
\begin{align*}
    Hf &= \partial_x^2 f - \left( \frac{\beta^2}{4} |\partial_x \cU|^2 - \frac{\beta}{2} \partial_x^2 \cU \right) f 
    = \partial_x^2 f - \left( \frac{\beta^2}{4} (r\sin(x))^2 - \frac{\beta}{2} (r\cos(x)) \right) f.
\end{align*}
We consider the self-consistency equation~\eqref{e:self_consist_single} close to the critical inverse temperature $\beta = 2 (1+ \varepsilon$), for a small $\varepsilon >0$. {For $\beta$ close to $2$, we use the approximation $r(\beta) \approx \sqrt{1 - \frac{2}{\beta}}$, which comes from the lower bound calculations for $r(\beta)$ in~\cite{bertini2010dynamical}[Sec. 2.1]}.}

Now, using the Taylor expansion for $\sqrt{1+\varepsilon}$, we have $\sqrt{\varepsilon(1+\varepsilon)} = \sqrt{\varepsilon} + O(\varepsilon \sqrt{\varepsilon})$. We set $\delta = \sqrt{\varepsilon}$ and, {ignoring terms of order ${\varepsilon}^2$ and above}, can then rewrite the equation for $H$ as:
\begin{align*}
    H f = f'' - \delta \cos(x) f - \delta^2 \sin^2(x) f + O(\delta^3) =: H_0 {f} + \delta H_1 {f}+ \delta^2 H_2 {f} + O(\delta^3).
\end{align*}
Therefore, we can consider $H$ as a small perturbation of the operator $H_0 f = f''$. 
Our goal is to calculate the eigenvalues of $H$ perturbatively for small $\delta$. We therefore consider the eigenvalue problem
\begin{equation}\label{eigenvalues}
    H \psi = E \psi.
\end{equation}
We expand $E$ and $\psi$ in power series in $\delta$:
\begin{align*}
    &\psi = \psi_0 + \delta \psi_1 + \delta^2 \psi_2 + \ldots \\
    &E = E_0 + \delta E_1 + \delta^2 E_2 + \ldots
\end{align*}
We substitute these expansions into~\eqref{eigenvalues} to obtain the following sequence of equations.

\textbf{Order $O(1)$}: $H_0 \psi_0 = E_0 \psi_0$.

The eigenvalues of $H_0$ are given by $E_0^{(m)} = -m^2$ for $m \in \mathbb{N}$, and the corresponding eigenfunctions are $\psi_0^{(m)}(x) = A_m \cos(mx) + B_m \sin(mx)$, $A_m, B_m \in \R$. Due to the symmetry of the interaction potential, we will only consider even eigenfunctions, i.e. $B_m = 0$ for all $m \in \mathbb{N}$. We take $A_m = \frac{1}{\sqrt{\pi}}$ so that $\langle \psi_0, \psi_0 \rangle = 1$.

\textbf{Order $O(\delta)$}: $H_0 \psi_1 + H_1 \psi_0 = E_0 \psi_1 + E_1 \psi_0$.

We take the inner product of both sides with $\psi_0$ to obtain
\begin{align*}
    \langle \psi_0, H_0 \psi_1 \rangle + \langle \psi_0, H_1 \psi_0 \rangle = E_0 \langle \psi_0, \psi_1 \rangle + E_1 \langle \psi_0, \psi_0 \rangle.
\end{align*}
Since $H_0$ is self-adjoint in $L^2(0,2\pi)$, we obtain
\begin{align*}
    E_1 = \frac{\langle \psi_0, H_1 \psi_0 \rangle}{\langle \psi_0, \psi_0 \rangle} =0,
\end{align*}
{where the last equality comes from evaluating $\langle \psi_0, H_1 \psi_0 \rangle$} explicitly.

\textbf{Order $O(\delta^2)$}: $H_0 \psi_2 + H_1 \psi_1 + H_2 \psi_0 = E_0 \psi_2 + E_1 \psi_1 + E_2 \psi_0 = E_0 \psi_2 + E_2 \psi_0$.

We again take the inner product with $\psi_0$ and use the self-adjointness of $H_0$ to obtain
\begin{align*}
    E_2 = \frac{\langle \psi_0, H_1 \psi_1 + H_2 \psi_0 \rangle}{\langle \psi_0, \psi_0 \rangle} = \langle \psi_0, H_1 \psi_1 + H_2 \psi_0 \rangle.
\end{align*}
We now calculate $\psi_1$. It satisfies the equation
\begin{align*}
    H_0 \psi_1 + H_1 \psi_0 = E_0 \psi_1,
\end{align*}
which takes the form
\begin{align*}
    \psi_1'' + m^2 \psi_1 =  \frac{1}{\sqrt{\pi}} \cos(x) \cos(mx) ,
\end{align*}
in $(0, 2 \pi)$ with periodic boundary conditions.
This is a second order inhomogeneous ODE; its solution is
\begin{eqnarray*}
    \psi_1(x) &=& \frac{1}{(4m^2-1)\sqrt{\pi}} \Big( 2 m \sin(x) \sin(mx) + \cos(x) \cos(mx)\Big) + C_2 \sin(mx) + C_1 \cos(mx),
\end{eqnarray*}
for $C_1, C_2 \in \mathbb{R}$.



We can now calculate $\langle \psi_0, H_1 \psi_1 \rangle$:
\begin{align*}
        \langle \psi_0, H_1 \psi_1 \rangle = 
        \begin{cases}
         - \frac{5}{12} \text{ if } m = 1\\
        -\frac{1}{2(4m^2-1)} \text{ if } m \neq 1.
    \end{cases}
\end{align*}
%

%
Similarly, we have:
\begin{align*}
    &\langle \psi_0, H_2 \psi_0 \rangle =
    \begin{cases}
        -\frac{1}{2} \text{ if } m \neq 1 \\
        -\frac{1}{4}  \text{ if } m = 1.
    \end{cases}    
\end{align*} 
We conclude that for $m = 1$ the second order perturbation is $E_2 = -\frac{2}{3}$, and for $m \neq 1$, $E_2 = - \frac{1}{2(4m^2-1)} - \frac{1}{2}$.

The eigenvalue of the perturbed operator $H$ is hence given by:
{
\begin{align*}
    E = E_0 + \delta E_1 + \delta^2 E_2 = -m^2 + \delta^2 E_2.
\end{align*}}
We are primarily interested in the first nonzero eigenvalue {as this represents the spectral gap of the operator}. {In Figure~\ref{fig:FirstEval}} we plot the asymptotic formula $E = -1 - \frac{2}{3} \delta^2$ versus the numerically obtained ones. The latter are obtained by using Matlab's \texttt{eig} function on the {(discretized)} operator $H f = f'' - \delta \cos(x) f - \delta^2 \sin^2(x) f $. {We note that on the scale of the figure the numerical and asymptotic eigenvalues are indistinguishable, even for moderately large $\delta$.}

\begin{figure}[H]
 \centering
    \includegraphics[width = 0.6\textwidth]{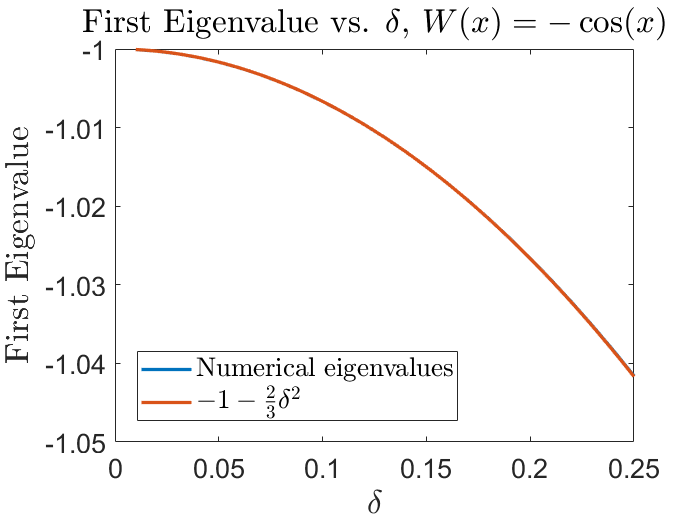}
    \caption{First eigenvalue as a function of $\delta$, $W(x) = -\cos(x)$.}
    \label{fig:FirstEval}
\end{figure}

\subsection{A higher order harmonic potential}
We now consider the interaction potential $W(x) = -\cos(nx)$. The Schr\"{o}dinger operator becomes: 
\begin{align*}
    H f = f'' - \frac{n^2}{2} \left(\frac{\beta^2}{2} r^2 \sin^2(nx)+ \beta r \cos(nx) \right)f.
\end{align*}
Writing $\beta = 2(1+\varepsilon)$ and using the same expansion as before, we obtain:
\begin{align*}
    Hf = f'' - \delta n^2 \cos(nx) f - \delta^2 n^2 \sin^2(nx) f + O(\delta^3) =: H_0 + \delta H_1 + \delta^2 H_2 + O(\delta^3).
\end{align*}

We note that we can indeed use the same expansion for $r$ as before, as this comes from using the Taylor expansion on $\exp( \beta r \cos(kx))$ and all terms involving the $\cos(kx)$ terms simplify. More precisely, we now have:
\begin{align*}
    r = \frac{\int_0^{2\pi} \cos(nx) \exp(\beta r \cos(nx)) \dd x}{\int_0^{2\pi} \exp(\beta r \cos(nx)) \dd x}.
\end{align*}
Approximating $e^x \approx 1 + x + \frac{x^2}{2}$ gives
\begin{align*}
    r &= \frac{\int_0^{2\pi} \cos(nx) + \beta r \cos^2(nx) + \frac{\beta^2}{2} r^2 \cos^3(nx) \dd x}{\int_0^{2\pi} 1 + \beta r \cos(nx) + \frac{\beta^2}{2} r^2 \cos^2(nx) \dd x} = \frac{\beta r \pi}{2\pi + \frac{\beta^2}{2} r^2 \pi}.
\end{align*} 
Substituting $\beta = 2(1 + \varepsilon)$, we deduce that
\begin{align*}
    r = \frac{\sqrt{\varepsilon}}{1+\varepsilon}.
\end{align*}

We study again the equations corresponding to different orders of $\delta = \sqrt{\varepsilon}$.

\textbf{Order $O(1)$}: $H_0 \psi_0 = E_0 \psi_0$.

As before, this is just the eigenvalue equation for the unperturbed operator $H_0 f = f''$. Therefore, $\psi_0(x) = \frac{1}{\sqrt{\pi}} \cos(mx)$ for $m \in \mathbb{N}$. 

\textbf{Order $O(\delta)$}: $H_0 \psi_1 + H_1 \psi_0 = E_0 \psi_1 + E_1 \psi_0$.

Using the same reasoning as with the previous case, we obtain
\begin{align*}
    E_1 = \langle \psi_0, H_1 \psi_0 \rangle
= 
    \begin{cases}
        0 & \text{ if } {m \neq \frac{n}{2}} \\
        -\frac{n^2}{2} & \text{ if } {m \neq \frac{n}{2}}.
    \end{cases}
\end{align*}
Therefore, if $n$ is even, we now have a non-zero first order perturbation; the corresponding eigenvalues of $H$ are
\begin{align*}
    E = -\frac{n^2}{2} \left(\frac{1}{2} + \delta \right).
\end{align*}

\textbf{Order $O(\delta^2)$}: $H_0 \psi_2 + H_1 \psi_1 + H_2 \psi_0 = E_0 \psi_2 + E_1 \psi_1 + E_2 \psi_0$.

As before,
\begin{align*}
    E_2 = \langle \psi_0, H_1 \psi_1 + H_2 \psi_0 \rangle.
\end{align*}

We only look at the second perturbation in the case where $E_1 = 0$, i.e. when $m \neq \frac{n}{2}$. Performing the same calculations as in the previous subsection, we obtain that for $n \neq m$:
\begin{align*}
    E_2 = \frac{n^2}{(n^2-4m^2)} \left(\frac{n^2}{2} \right) - \frac{n^2}{2},
\end{align*}
while for $n = m$:
\begin{align*}
    E_2 &= - \frac{5n^2}{12} - \frac{n^2}{4} = \frac{2n^2}{3}.
\end{align*}

Therefore, we obtain the following eigenvalues for $H$:
\begin{itemize}
    \item For $n \in \mathbb{N}$ even, $E_n = -\frac{n^2}{2} \left( \frac{1}{2} + \delta \right)$;
    \item For $n, m \in \mathbb{N}$, $m \neq \frac{n}{2}$, $E_n = -m^2 + \delta^2 \left( \frac{n^2}{(n^2-4m^2)} \left(\frac{n^2}{2} \right) - \frac{n^2}{2} \right)$;
    \item For $n \in \mathbb{N}$, $E_n = -n^2 - \delta^2 \left(\frac{n^2}{6} \right)$. 
\end{itemize}

{We have also computed these eigenvalues numerically for a few potentials with higher harmonics and for a fixed value of $\delta$, and verified that these agree with the formulae presented above. The following tables illustrate this agreement for the first few eigenvalues in the cases $W(x) = -\cos(nx)$, for $n \in \{2,3,4\}$.}
\begin{table}[h!]
    \caption{Eigenvalues for $W(x) = -\cos(2x)$, $\delta = 0.1$.} 
    \centering 
    \begin{tabular}{c c c c} 
    \hline\hline 
    m & Perturbation eigenvalue & Numerical eigenvalue &  \\ [0.5ex] 
    \hline 
    1 & -1.2 & -1.2154  \\ 
    2 & -4.0267 & -4.0267  \\
    3 & -9.0225 & -9.0221 \\
    4 & -16.0213 & -16.0213  \\ [1ex] 
    \hline 
    \end{tabular}
    \label{table:nonlin1} 
\end{table}


\begin{table}[h!]
    \caption{Eigenvalues for $W(x) = -\cos(3x)$, $\delta = 0.1$.} 
    \centering 
    \begin{tabular}{c c c c} 
    \hline\hline 
    m & Perturbation eigenvalue & Numerical eigenvalue &  \\ [0.5ex] 
    \hline 
    1 & -0.964 & -0.9653  \\ 
    2 & -4.1029 & -4.1015  \\
    3 & -9.06 & -9.0600  \\
    4 & -16.0524 & -16.0524  \\ [1ex] 
    \hline 
    \end{tabular}
    \label{table:nonlin2} 
\end{table}

\begin{table}[h!]
    \caption{Eigenvalues for $W(x) = -\cos(4x)$, $\delta = 0.1$.} 
    \centering 
    \begin{tabular}{c c c c} 
    \hline\hline 
    m & Perturbation eigenvalue & Numerical eigenvalue &  \\ [0.5ex] 
    \hline 
    1 & -0.9733 & -0.9739  \\ 
    2 & -4.8 & -4.8615  \\
    3 & -9.144 & -9.1434  \\
    4 & -16.067 & -16.067  \\ [1ex] 
    \hline 
    \end{tabular}
    \label{table:nonlin3} 
\end{table}

\subsection{Multichromatic potentials}
We now consider $W(x) = -\cos(x) - \frac{1}{2} \cos(2x)$. We recall that now the stationary distribution is given by
\begin{equation}\label{stationarydistcos2x}
    \rho_{\infty}(x) = \frac{\exp(\beta (r_1 \cos(x) + \frac{r_2}{2} \cos(2x)))}{\int_0^{2\pi}\exp(\beta (r_1 \cos(x) + \frac{r_2}{2} \cos(2x))) \dd x},
\end{equation}
with $r_1$ and $r_2$ satisfying the self-consistency equations
\begin{equation}\label{e:sc1}
    r_1 = \frac{\int_0^{2\pi} \cos(x)\exp(\beta (r_1 \cos(x) + \frac{r_2}{2} \cos(2x))) \dd x}{\int_0^{2\pi} \exp(\beta (r_1 \cos(x) + \frac{r_2}{2} \cos(2x))) \dd x}
\end{equation}
and
\begin{equation}\label{e:sc2}
    r_2 = \frac{\int_0^{2\pi} \cos(2x)\exp(\beta (r_1 \cos(x) + \frac{r_2}{2} \cos(2x))) \dd x}{\int_0^{2\pi} \exp(\beta (r_1 \cos(x) + \frac{r_2}{2} \cos(2x))) \dd x}.
\end{equation}
{We note that, for the non-trivial stationary state which corresponds to having exactly one non-zero coefficient $r_i$, the calculations become the same as the ones done in the previous subsection for a higher but single harmonic potential. Therefore, in this subsection we work with the case where both $r_1$ and $r_2$ are non-zero.}
We need to find approximations for $r_1$ and $r_2$ near the critical value of $\beta$. Using that $\exp(x) \sim 1 + x + \frac{x^2}{2}$ in equations (\ref{e:sc1})-(\ref{e:sc2}), we obtain that, close to $\beta = 2$, $r_2^2(\beta)$ behaves linearly, and that
\begin{equation}\label{e:r1r2approx}
    r_1 = \sqrt{r_2 \left( r_2 + \frac{2}{\beta} \right)}.
\end{equation}

To get an explicit expression for $r_2(\beta)$ we use a basic {regression} algorithm to obtain that the best approximation is given by
\begin{align*}
    r_2(\beta) = \sqrt{\frac{3}{2}-\frac{3}{\beta}}.
\end{align*}

Substituting this in (\ref{e:r1r2approx}), we obtain the corresponding approximating function for $r_1$:
\begin{align*}
    r_1(\beta) = \sqrt{\frac{3}{2} - \frac{3}{\beta} + \frac{2}{\beta}\sqrt{\frac{3}{2}-\frac{3}{\beta}}}
\end{align*}

We now use these approximations to study the Schr\"{o}dinger operator
for $W = -\cos(x)- \frac{1}{2} \cos(2x)$. With $\rho_{\infty}$ as in (\ref{stationarydistcos2x}), we have that
\begin{align*}
    U(x) := \frac{\beta}{2} W*\rho_{\infty}(x) = - \frac{\beta}{2}r_1 \cos(x) - \frac{\beta}{4} r_2 \cos(2x).
\end{align*}

Therefore
\begin{align*}
    &H f = f'' - \left( \frac{(U')^2}{4} + \frac{U''}{2} \right) f \\
    &= f'' - \left(\frac{\beta^2}{4} (r_1 \sin(x) + r_2 \sin(2x))^2 +  \frac{\beta}{2} (r_1 \cos(x) + 2r_2 \cos(2x))\right) f.
\end{align*}
We now substitute $\beta = 2(1 + \varepsilon)$, $\eta = {\varepsilon}^{1/4}$, use the above expressions for $r_1(\beta), r_2(\beta)$ and ignore terms of order higher than $O(\eta^3)$. After lengthy calculations, we obtain the following expression for the perturbed operator:

\begin{align*}
    Hf & := f'' - \left( \sqrt{\frac{3}{2}} \eta^2 \sin^2(x) + \left( \frac{3}{2} \right)^{1/4} \eta \cos(x) + 2 \sqrt{\frac{3}{2}} \eta^2 \cos(2x) \right) f \\
    &=: H_0 + \eta H_1 + \eta^2 H_2,
\end{align*}
where
\begin{align*}
    &H_0 f = f'' \\
    &H_1 f = - \left(\frac{3}{2} \right)^{1/4} \cos(x) f(x) \\
    &H_2 f = - \sqrt{\frac{3}{2}} \left( \sin^2(x) + 2 \cos(2x) \right) f(x) = - \frac{1}{2} \sqrt{\frac{3}{2}} (1 + 3\cos(2x)) f(x) .
\end{align*}
We follow the usual perturbation argument to find the eigenvalues $E$ and eigenfunctions $\psi$ solving $H \psi = E \psi$. We solve the equation order by order. 

\textbf{Order $O(1)$}: $H_0 \psi_0 = E_0 \psi_0$.

As usual, this gives us $\psi_0 = \frac{1}{\sqrt{\pi}} \cos(nx)$, $E_0 = - n^2$. 

\textbf{Order $O(\eta)$}: $H_0 \psi_1 + H_1 \psi_0 = E_0 \psi_1 + E_1 \psi_0$.

Performing the same steps as before gives us $E_1 = 0$ again.


\textbf{Order $O(\eta^2)$}: $H_0 \psi_2 + H_1 \psi_1 + H_2 \psi_0 = E_0 \psi_2 + E_1 \psi_1 + E_2 \psi_0 = E_0 \psi_2 + E_2 \psi_0$.

We again take the inner product with $\psi_0$ and use that $H_0$ is Hermitian to obtain
\begin{align*}
    E_2 = \langle \psi_0, H_1 \psi_1 + H_2 \psi_0 \rangle.
\end{align*}

Solving similar equations as in the previous subsections we have, for $n = 1$:
\begin{align*}
    E_2 = - \frac{(36)^{1/4}}{12} - \frac{3}{2} \sqrt{\frac{3}{2}}
\end{align*}
and for $n \neq 1$:
\begin{align*}
    E_2 = -\frac{1}{2} \sqrt{\frac{3}{2}} \left( \frac{1}{4n^2-1} + 1 \right).
\end{align*}

Therefore, the eigenvalues of $H$ are
{
\begin{align*}
    E = 
    \begin{cases}
        -1 -\eta^2 \left( \frac{(36)^{1/4}}{12} + \frac{3}{2} \sqrt{\frac{3}{2}} \right) \text{ for } n = 1 \\
        -n^2 - \eta^2 \frac{1}{2} \sqrt{\frac{3}{2}} \left( \frac{1}{4n^2-1} + 1 \right) \text{ for } n \neq 1
    \end{cases}.
\end{align*}
}

Fixing $\eta$ (for example, $\eta = 0.5$) and calculating a few values, we can see that they are in very good agreement with the eigenvalues found numerically by MATLAB.
As before, {in Figure~\ref{fig:FirstEvalMulti},} we provide a plot to illustrate the behaviour of the first eigenvalue as $\eta$ varies.

\begin{figure}[H]
 \centering
    \includegraphics[width = 0.6\textwidth]{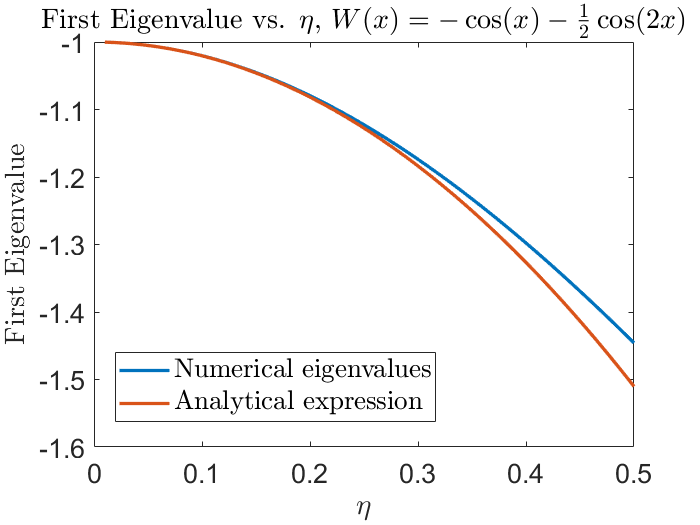}
    \caption{First eigenvalue as a function of $\eta$, $W(x) = -\cos(x)-\frac{1}{2} \cos(2x)$.}
    \label{fig:FirstEvalMulti}
\end{figure}

\section{Numerical experiments}
\label{sec:num_exp}


In this section we consider both steady states and the dynamics of the McKean-Vlasov equation (\ref{e:mckean_external}) by solving it numerically using pseudospectral methods ~\cite{boyd2001chebyshev,nold2017pseudospectral,DDFTCode}.  
Furthermore, we compare the solution of the PDE to results obtained from Monte-Carlo simulations {of the $N$ particle system \eqref{e:IPS} using the Euler-Maruyama method~\cite{kloedenplaten2013}.}

\subsection{Steady states and intermediate dynamics}
\label{subsec: steady states}
As described in Section~\ref{sec:SelfConsistency}, we expect there to be a qualitative change in the nature of the steady-state solution of the PDE (\ref{e:mckean_external}) at $\beta_c$. Due to the gradient structure of the PDE, (numerical approximations to the) steady states can be determined either by solving the PDE directly over a long time interval, or by solving the self-consistency equation (\ref{e:lane_emden}) iteratively, for example via Picard iteration.  A key difference here is that iterative approaches can converge to unstable steady states, whereas the PDE method always approaches a stable steady state.  This motivates our choice in this section to use the long time PDE solutions.

In order to focus on the effects of the choice of the interaction parameters on the dynamics, in this section we consider dynamics with no external potential.  We will reintroduce the external potential in the following section when comparing against stochastic dynamics. {Unless otherwise specified in the captions, all figures have $V(x) = 0$.}

We reiterate from above that for for $\beta < \beta_c$, we expect the uniform state to be stable, corresponding to the long time solution of the PDE being $\rho_{\infty} = 1/(2\pi)$.  In contrast, for $\beta > \beta_c$, we expect to observe the other, peaked solutions. 
Furthermore, as $\beta$ grows larger, we expect the steady state to be more strongly peaked.

In Figures~\ref{fig:longtime1}--\ref{fig:longtime4} we show `intermediate' and `long' time dynamics of the PDE for various interactions, $W$, and a range of $\beta$.  The particular choices of $W$ are somewhat arbitrary; we have chosen interactions with $n=2$ and $n=3$ coefficients, with the values chosen to result in a range of different dynamics.

We start with an initial condition that is a small perturbation of the uniform state - $\rho_0(x) = \frac{1}{2\pi} + 0.01 \sin \left( x - \frac{\pi}{2} \right)$.  
We note that the particular form of the initial condition dictates the position of the peak in the steady state; we have chosen it such that the peak is located at $\theta = \pi$ for clarity. Due to the translational invariance of the problem, this perturbation can, in principle, be chosen to control the position of the peak(s); we will make use of this in the following section when comparing to the SDE dynamics. {While we use a one-peaked perturbation here, a multi-peaked initial condition does not significantly affect the dynamics, as long as the perturbation away from the uniform state is small.} One may think that, in parameter regimes where the uniform state is unstable, a deliberate perturbation is not necessary to see dynamics, as this should be produced by the accumulation of numerical errors.  However, due to the very high accuracy of the chosen numerical schemes we find that it is necessary to deliberately perturb the uniform state in order to induce dynamics in reasonable computational times.

As can be seen in the figures, for $\beta < \beta_c$, the long time solution is indeed the {uniform} steady state.  As expected, for $\beta>\beta_c$ we see a more interesting range of long time solutions, which (for the examples below) have a single peak.  Note that for $\beta$ close to $\beta_c$ the convergence to equilibrium can be very slow; this is demonstrated in the right hand plots of Figures~\ref{fig:longtime3} and~\ref{fig:longtime4} where the solutions for $\beta$ close to $\beta_c$ have not yet converged to the final steady state.  This effect is due to the exponential slowing down of the dynamics near the critical value of $\beta$.

Considering now the `intermediate' dynamics in the left hand plots of Figures~\ref{fig:longtime3} and~\ref{fig:longtime4}, we note that there are transient regimes where the number of peaks is the value of $k$ corresponding to the largest $|a_k|$. However, these states appear to be unstable as the long time dynamics results in a single peak. The instability of multi-peak steady states was also observed in \cite{primi2009mass}, \cite{geigant2012stability}.

Looking at the dynamics in a bit more detail, we see that for the interactions in Figures~\ref{fig:longtime1} and~\ref{fig:longtime2}, the largest (in magnitude) coefficient is $a_1$, which corresponds to a single Fourier mode.  As expected, this leads to relatively simple dynamics in which the solution quickly converges to a single peak.  As shown in Figure~\ref{fig:longtime2}, adding a third non-zero coefficient does not change the critical value $\beta_c$, nor the qualitative nature of the solution. However, it does affect the quantitative dynamics, for example by decreasing the width of the final solutions for fixed $\beta$.  There is also a more obvious three-peaked state during the dynamics for $\beta=2$, although it is possible that this arises at a different time for the dynamics in Figure~\ref{fig:longtime1}. 

In Figures~\ref{fig:longtime3} and~\ref{fig:longtime4} we consider interactions where the largest magnitude coefficient corresponds to a higher Fourier mode.  In these cases, we clearly see the transient states which have a number of peaks corresponding to the Fourier mode with the coefficient with the largest magnitude. However, the steady state solution is dominated by the lowest non-zero Fourier mode, which in the cases shown here results in a single peak. {We note that the unstable states are still observable in the numerical simulations due to the high accuracy of the computational methods used.}


\begin{figure}[H]
  \centering
  \begin{minipage}[b]{0.49\textwidth}
    \includegraphics[width=\textwidth]{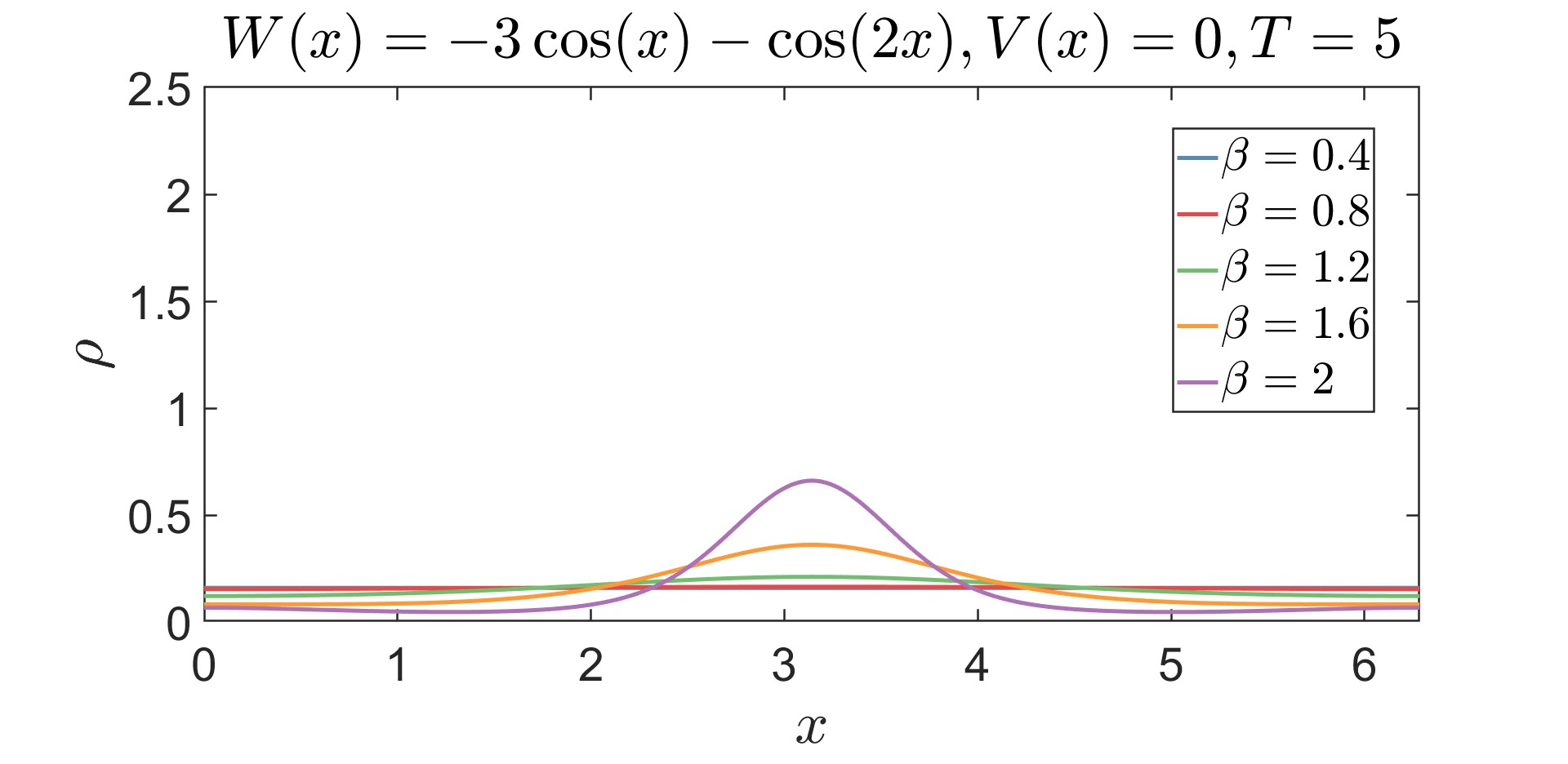}
  \end{minipage}
  \hfill
  \begin{minipage}[b]{0.49\textwidth}
    \includegraphics[width=\textwidth]{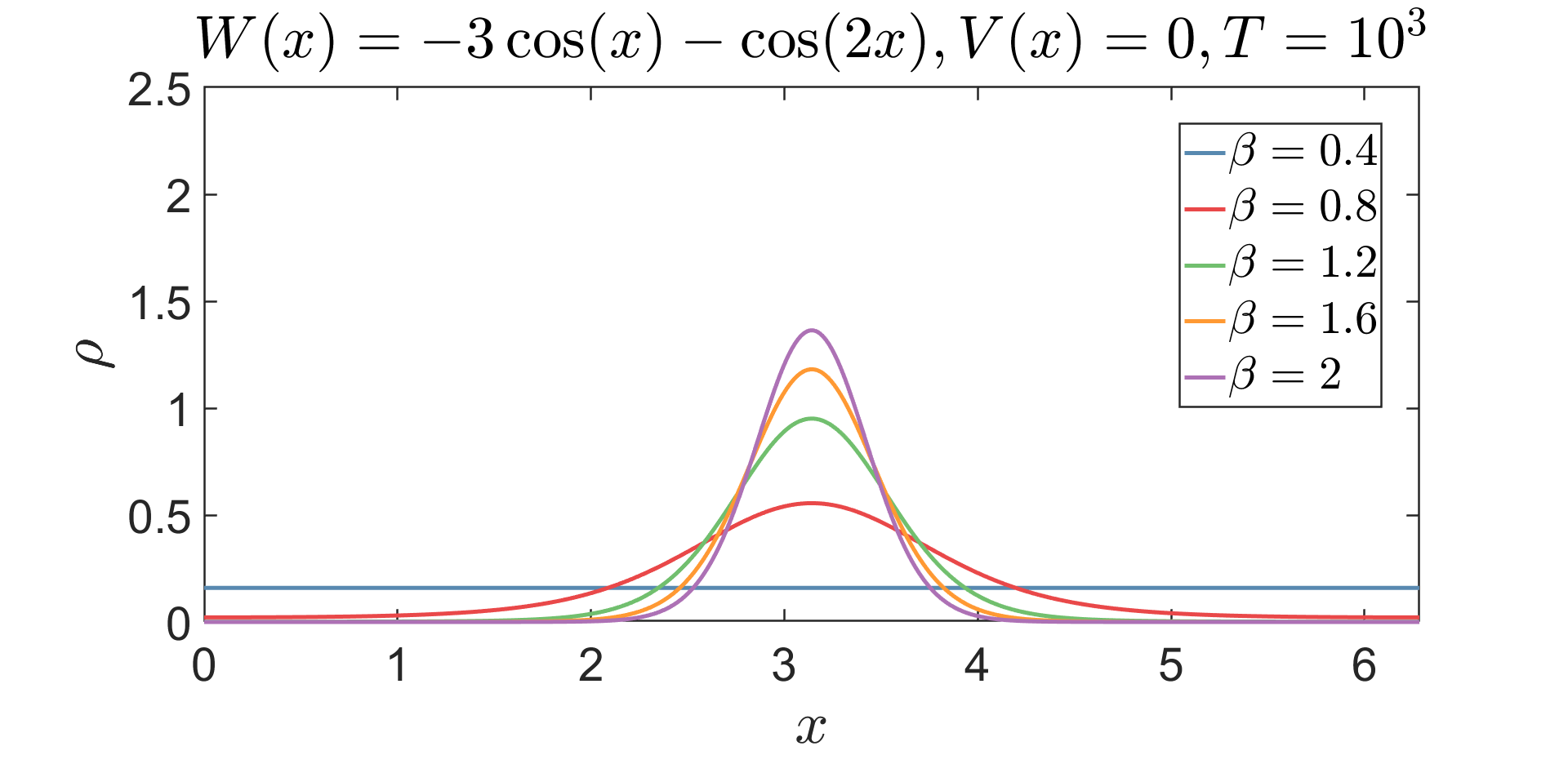}
  \end{minipage}
    \caption{PDE dynamics for different values of $\beta$ for the interacting potential {$W(x) = - 3\cos(x) - \cos(2x)$} for times $T = 5$ (left) and $T = 10^3$ (right). The critical temperature is $\beta_c = \frac{2}{3}$.}
    \label{fig:longtime1}
\end{figure}

\begin{figure}[H]
  \centering
  \begin{minipage}[b]{0.49\textwidth}
    \includegraphics[width=\textwidth]{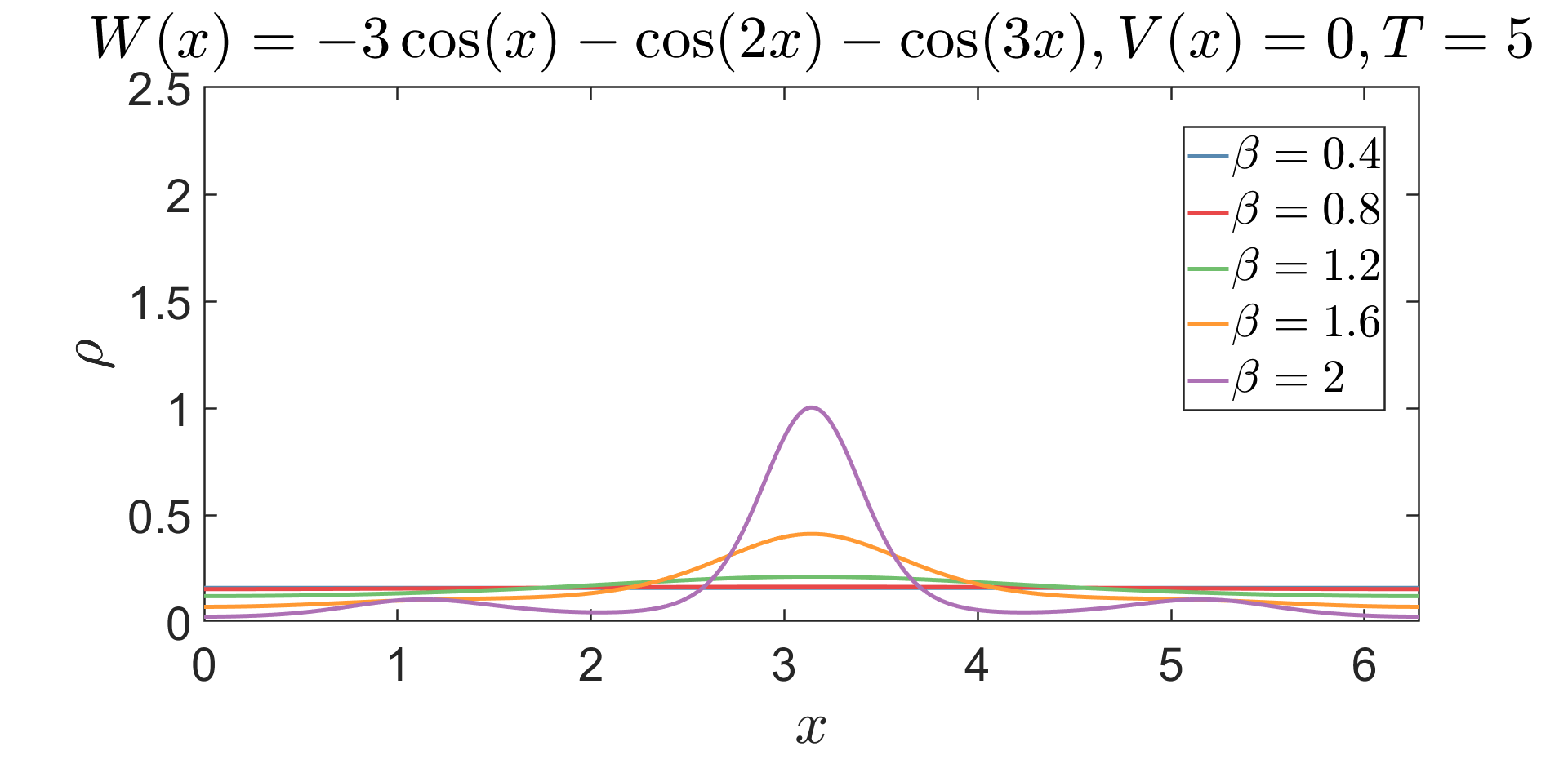}
  \end{minipage}
  \hfill
  \begin{minipage}[b]{0.49\textwidth}
    \includegraphics[width=\textwidth]{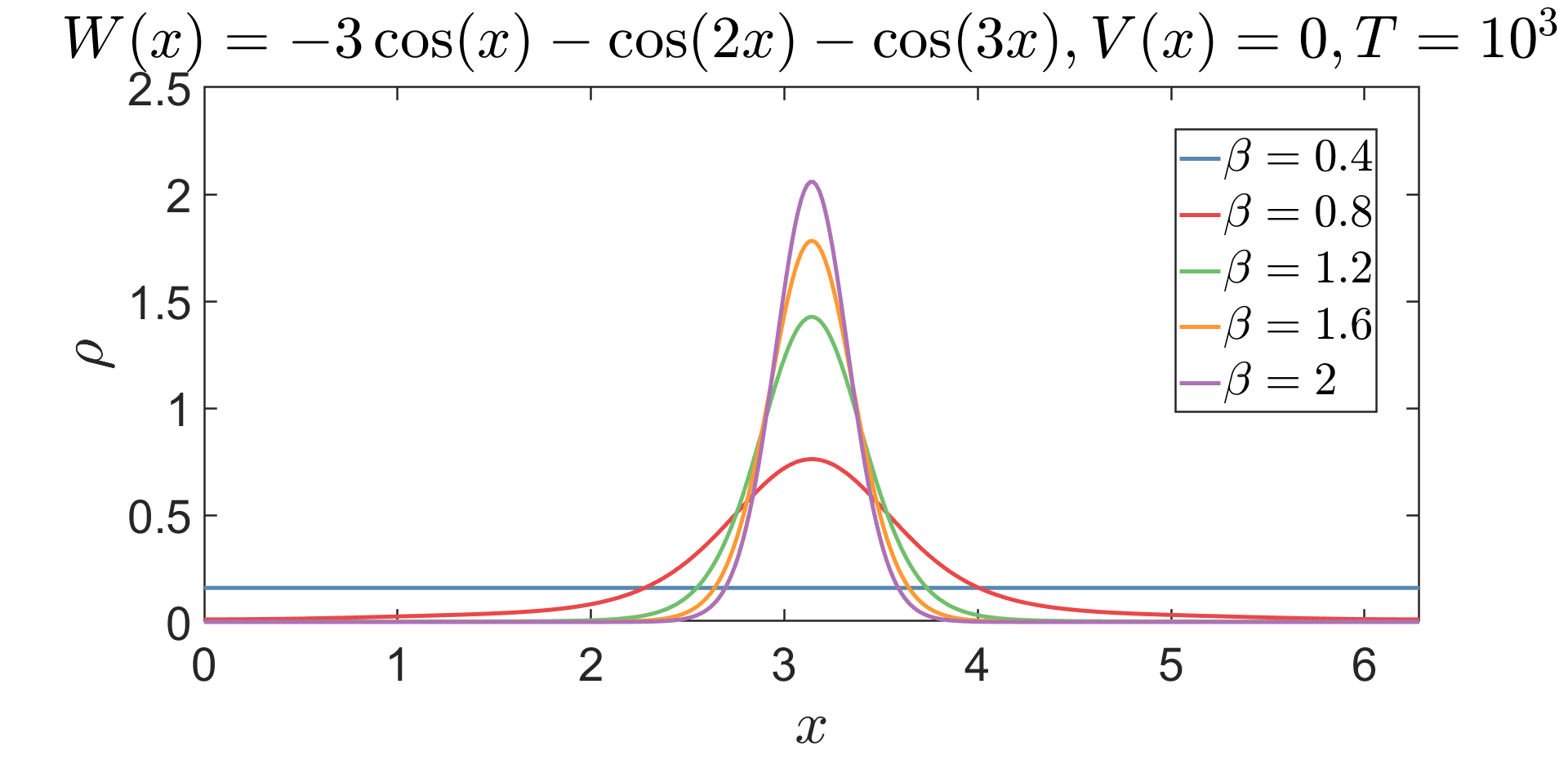}
  \end{minipage}
    \caption{PDE dynamics for different values of $\beta$ for the interacting potential {$W(x) = - 3\cos(x) - \cos(2x)-\cos(3x)$} for times $T = 5$ (left) and $T = 10^3$ (right). The critical temperature is $\beta_c = \frac{2}{3}$.}
    \label{fig:longtime2}
\end{figure}

\begin{figure}[H]
   \centering
    \begin{minipage}[b]{0.49\textwidth}
     \includegraphics[width=\textwidth]{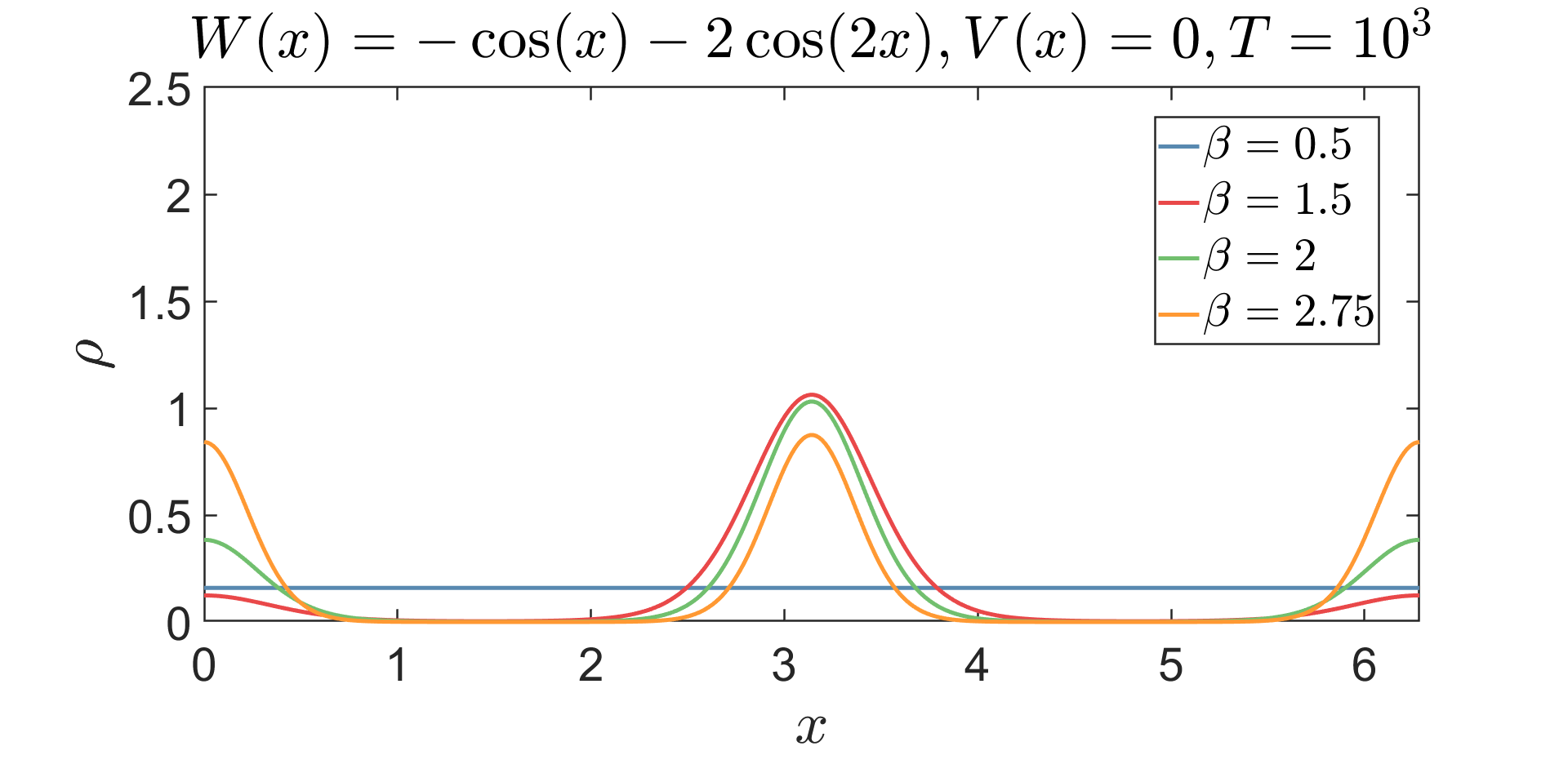}
   \end{minipage}
   \hfill
   \begin{minipage}[b]{0.49\textwidth}
     \includegraphics[width=\textwidth]{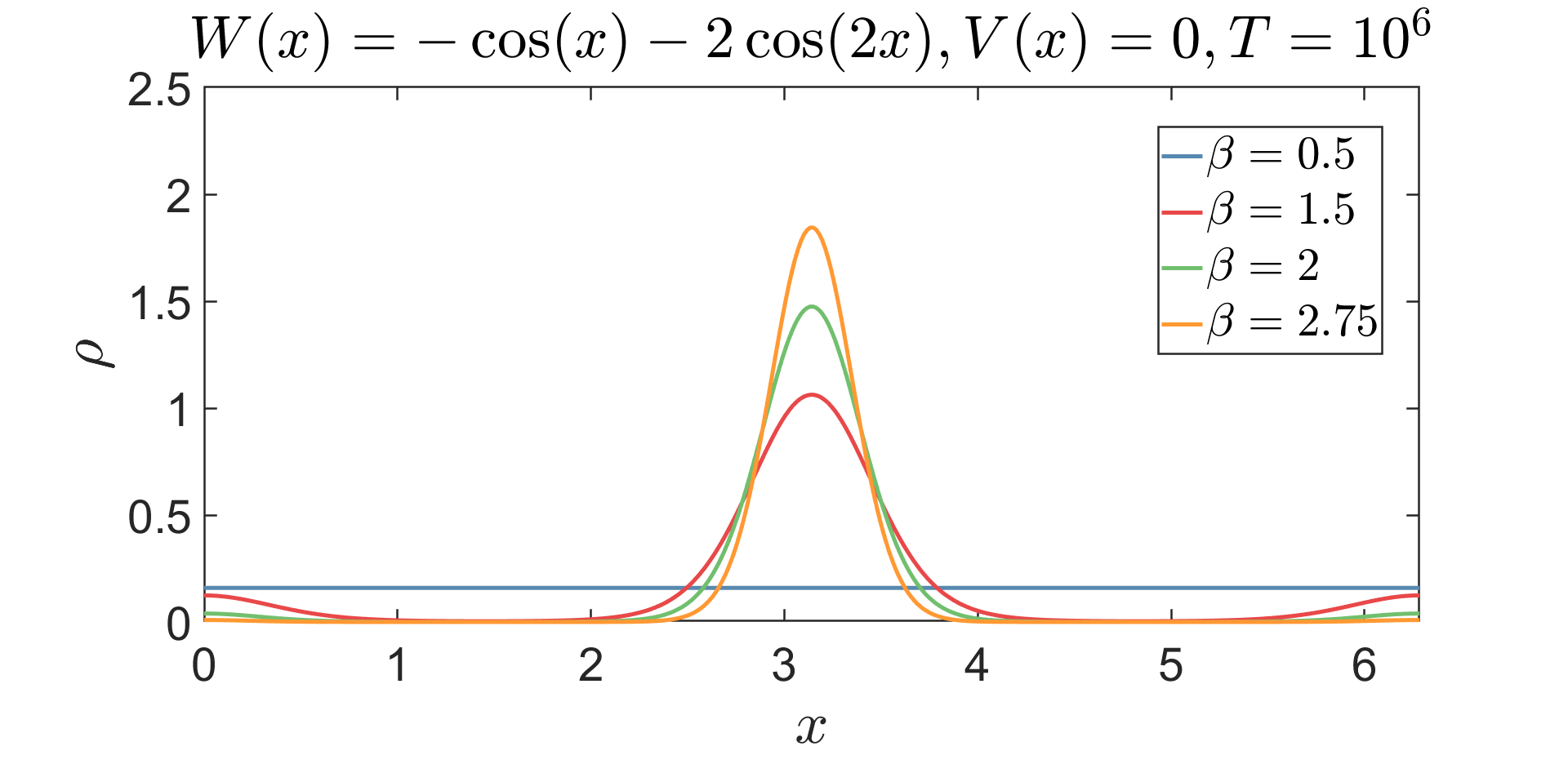}
   \end{minipage}
      \caption{PDE dynamics for different values of $\beta$ for the interacting potential $W(x) = -\cos(x)-2\cos(2x)$, for times $T = 10^3$ (left) and $T = 10^6$ (right). The critical temperature is $\beta_c = 1$.}
      \label{fig:longtime3}
\end{figure}

\begin{figure}[H]
  \centering
  \begin{minipage}[b]{0.49\textwidth}
    \includegraphics[width=\textwidth]{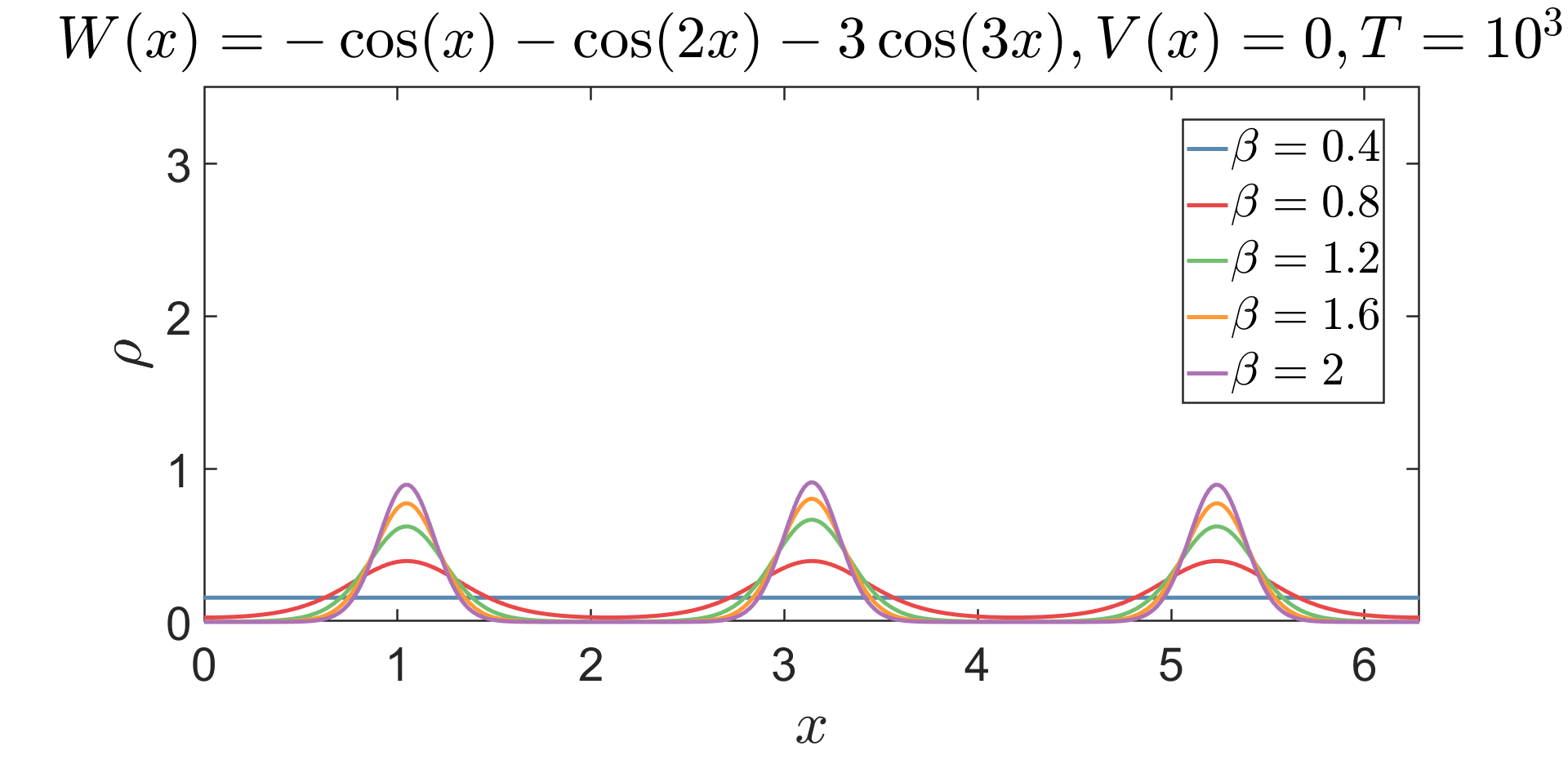}
  \end{minipage}
  \hfill
  \begin{minipage}[b]{0.49\textwidth}
    \includegraphics[width=\textwidth]{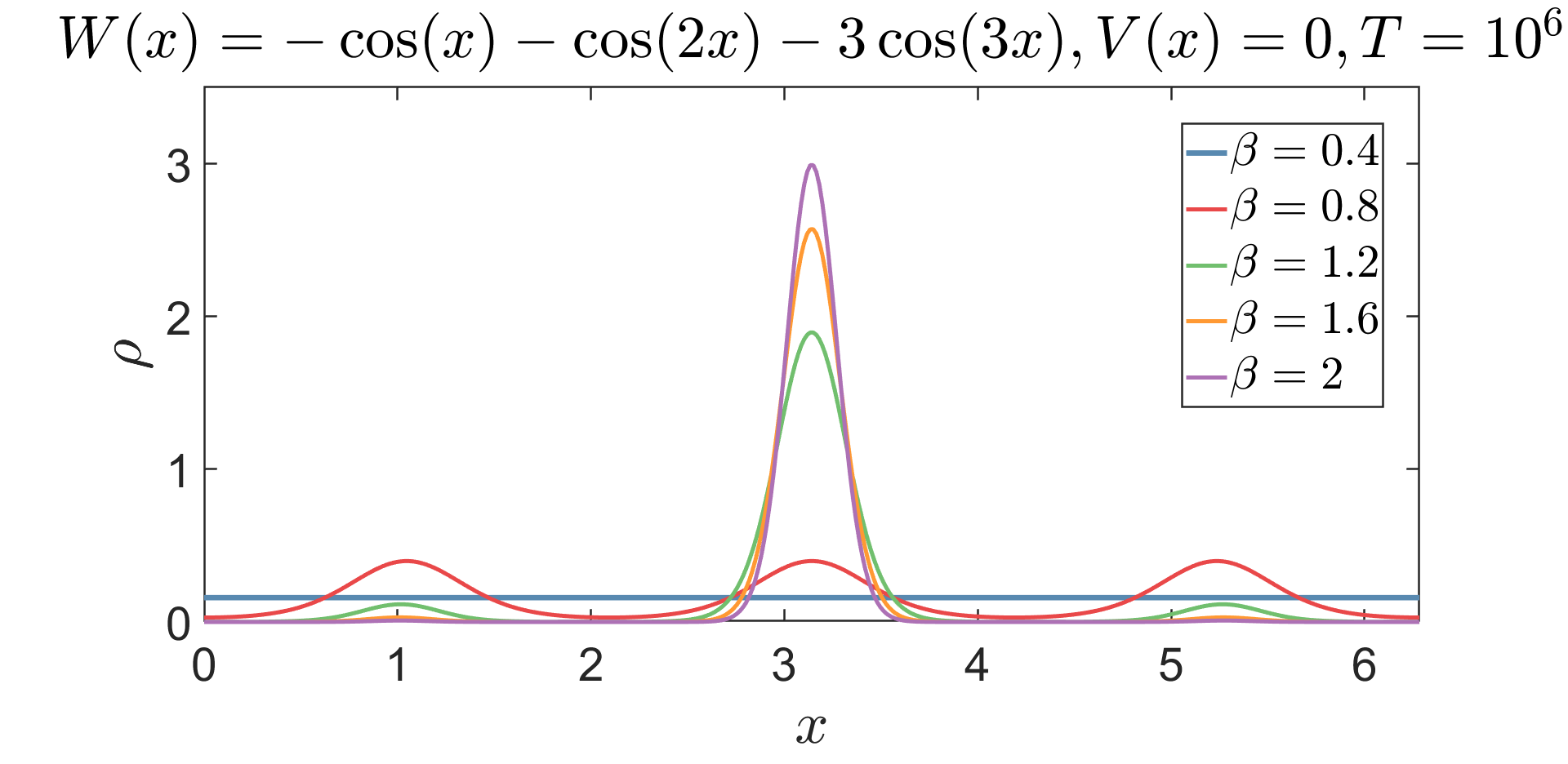}
  \end{minipage}
  \hfill
     \caption{PDE dynamics for different values of $\beta$ for the interacting potential $W(x) = -\cos(x)-\cos(2x)-3\cos(3x)$, for times $T = 10^3$ (left) and $T=10^6$ (right). The critical temperature is $\beta_c = \frac{2}{3}$.}
     \label{fig:longtime4}
\end{figure}

Figures \ref{fig:longtime1}-\ref{fig:longtime4} only show stable steady states with one single peak. It is possible to construct models with stable multipeak stationary distributions by simply rescaling the domain of the interaction potential, i.e. by considering potentials with higher harmonics and with a zero first Fourier mode. For example, the interaction potential $W(x) = - \cos(2x)$ will have a two-peak stable steady state.

\subsection{Comparison with Monte Carlo Simulations}

In this section we compare the dynamics of the PDE, \eqref{e:mckean_external}, with those of the underlying interacting particle SDE,~\eqref{e:IPS}.  The aim here is twofold: we will (i) demonstrate the effects of translational invariance on the results of stochastic sampling, and (ii) compare the PDE and SDE dynamics directly for a single initial condition with a large number of particles; this is a numerical demonstration of the mean-field limiting dynamics agreeing.  

We will also investigate systems which have a non-zero confining potential.  For a non-trivial potential, this breaks the translational invariance and allows us to demonstrate very good agreement between the two solutions by performing multiple runs of the stochastic dynamics and averaging.  This is computationally much cheaper than performing a single run with a larger number of particles: For $N$ particles and $R$ runs, the computational cost scales approximately as $N^2R$ (where the $N^2$ scaling results from having to compute the interaction potential), so increasing the number of runs is much cheaper than increasing the number of particles.

To numerically solve the SDE ~\eqref{e:IPS}, we use the Euler-Maruyama method~\cite{kloedenplaten2013} and compare our findings with the results obtained from the PDE solver. Unless otherwise stated, in all the simulations we use 500 particles, and a timestep of $\Delta t = 0.01$.  
The initial condition is generated by sampling from the same initial condition $\rho_0$ as for the PDE given in Section \ref{subsec: steady states} using Monte-Carlo slice sampling~\cite{neal2003slice}.

We first consider the interaction potential $W(x) = -\cos(x)-\frac{1}{2}\cos(2x)$, with no confining potential, an interaction strength $\kappa = 1$, and inverse temperature $\beta = 3$.  We run the dynamics up to a final time $T=200$.

As mentioned, in the absence of a confining potential, the problem is translationally invariant. This means, in particular, that above the phase transition we have infinitely many stationary states parameterized by an angle $\theta \in[0, 2\pi]$ \cite{bertini2010dynamical}. Consequently, when performing particle simulations, if we average over many realizations of the noise, we obtain (approximately) the uniform distribution; see the left plot in Figure~\ref{fig:cos_cos2x}. On the other hand, if we perform a single simulation, with a sufficiently large number of particles, then the results of the stochastic simulations are in good agreement with the results of the PDE, up to a translation in space; see the right plot in Figure~\ref{fig:cos_cos2x}.  Note that, in order to more clearly demonstrate the agreement, we have adjusted the PDE initial condition through a translational shift so that the positions of the peaks (approximately) align.  This gives exactly the same result as simply re-plotting the PDE (or SDE) solution on a translated axis.

\begin{figure}[H]
  \centering
  \begin{minipage}[b]{0.49\textwidth}
    \includegraphics[width=\textwidth]{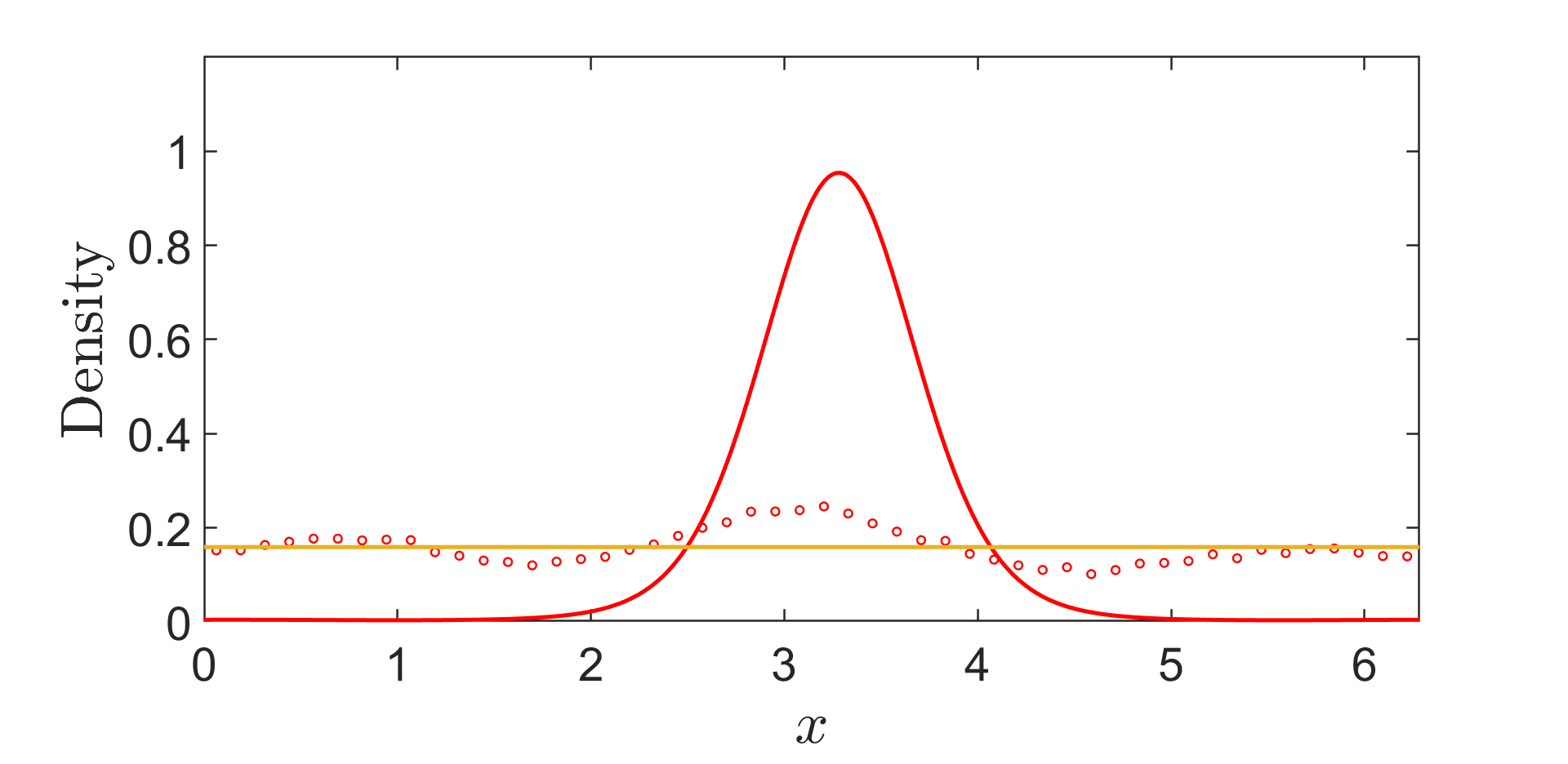}
  \end{minipage}
  \hfill
  \begin{minipage}[b]{0.49\textwidth}
    \includegraphics[width=\textwidth]{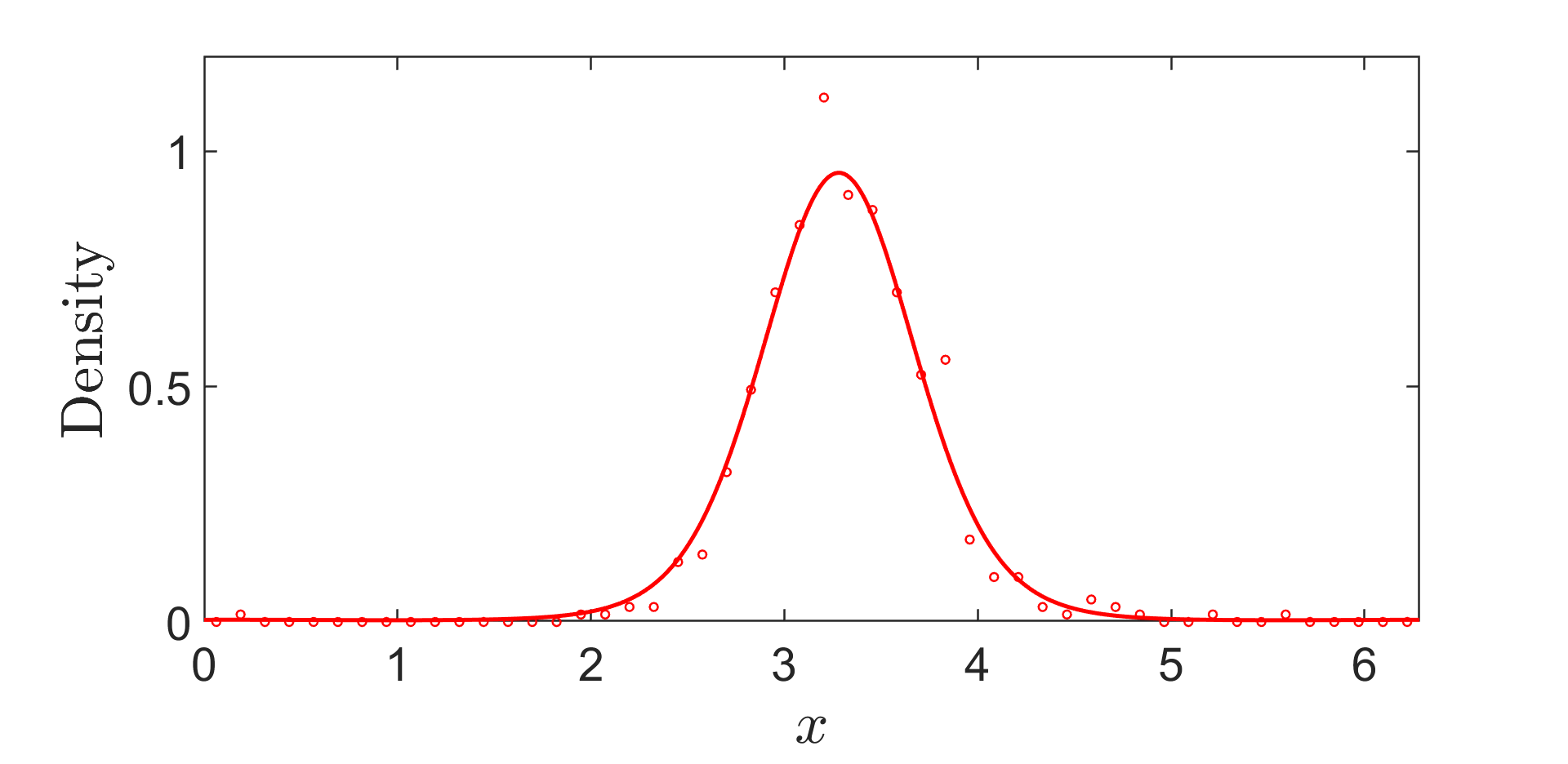}
  \end{minipage}
\caption{PDE and SDE dynamics for $W(x) = -\cos(x)-\frac{1}{2}\cos(2x)$ for 500 particles and 100 runs (left) and 1 run (right).}
\label{fig:cos_cos2x}
\end{figure}

We now compare the PDE and SDE dynamics for a range of other interactions and no external potential.  Our next example concerns the interaction potential $W(x) = -\frac{1}{4} \cos(4x) - \frac{1}{6} \cos(6x)$, and is chosen to demonstrate the agreement for multi-peaked solutions.  
We integrate up to $T = 200$. 
 As can be seen in Figure~\ref{fig:PDESDE4peaks}, the agreement between the PDE and SDE is very good. Note that here we did not need to shift the PDE solution to align the peaks; this would not be true for a different realisation of the noise.

\begin{figure}[H]
    \centering
    \includegraphics[width = 0.75\textwidth]{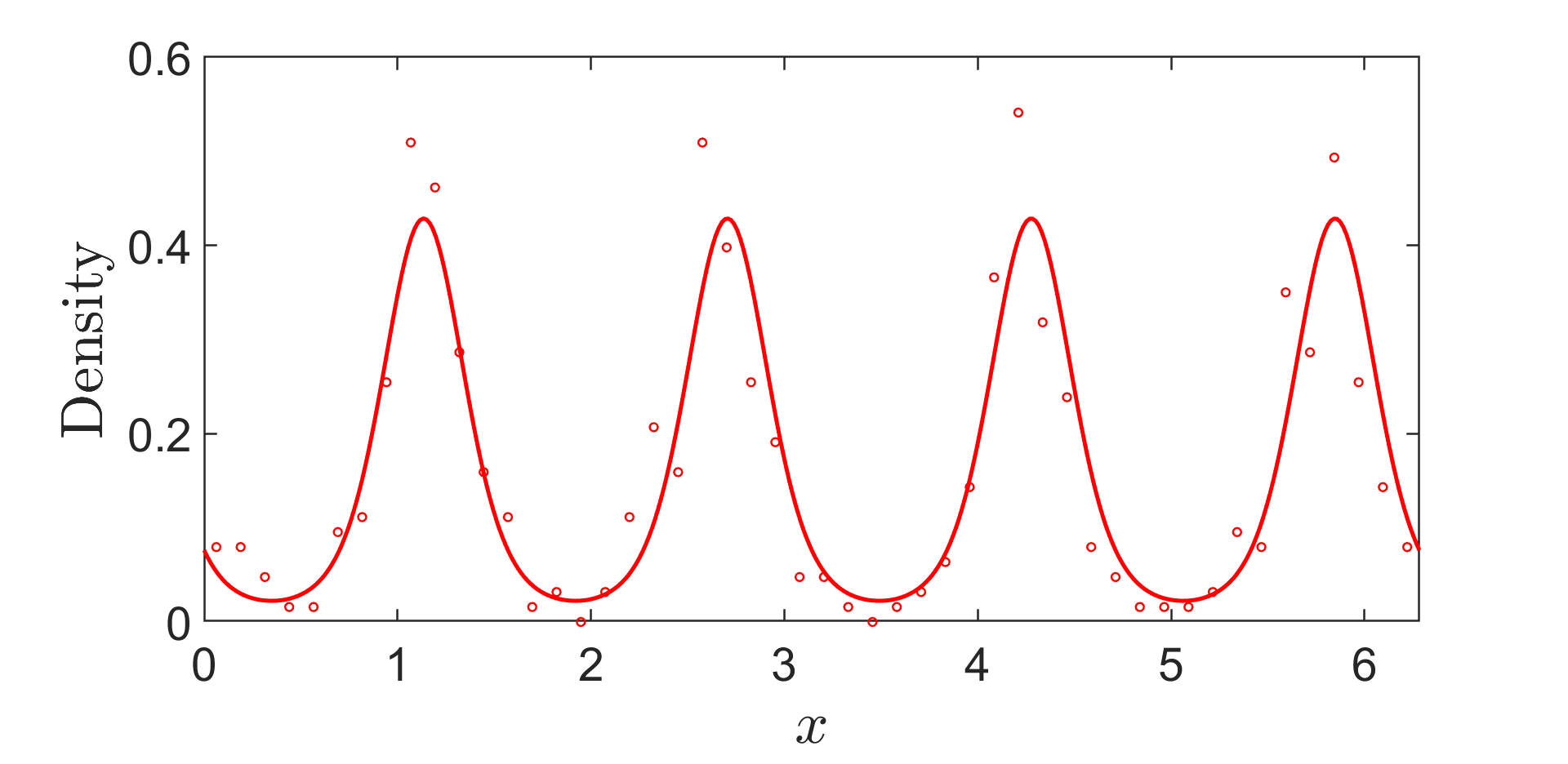}
    \caption{PDE and SDE dynamics with one run for $W(x) = -\frac{1}{4}\cos(4x) - \frac{1}{6}\cos(6x), \beta = 10$.}
    \label{fig:PDESDE4peaks}
\end{figure}

Our next example is chosen from~\cite{vukadinovic2023phase} concerning Hodgkin-Huxley oscillators, where the second Fourier mode now has positive sign. We take $W(x) = -\cos(x) + \frac{1}{2}\cos(2x)$. Our numerical experiments (not shown) agree with the result of the paper that there is a phase transition at $\beta_c = 2$. We plot here the density for $\beta = 3$, with {N = $100$}, T = $200$. 
The results are shown in Figure~\ref{fig:PDESDEHH}; again, the agreement between the SDE and PDE is very good. 
{We note that the Hodgkin-Huxley model includes a positive Fourier mode, and so is not included in the space of interaction potentials considered in our main framework; however, our numerical simulations are equally applicable to such potentials, allowing us to also analyse this example.}

\begin{figure}[H]
    \centering
    \includegraphics[width = 0.75\textwidth]{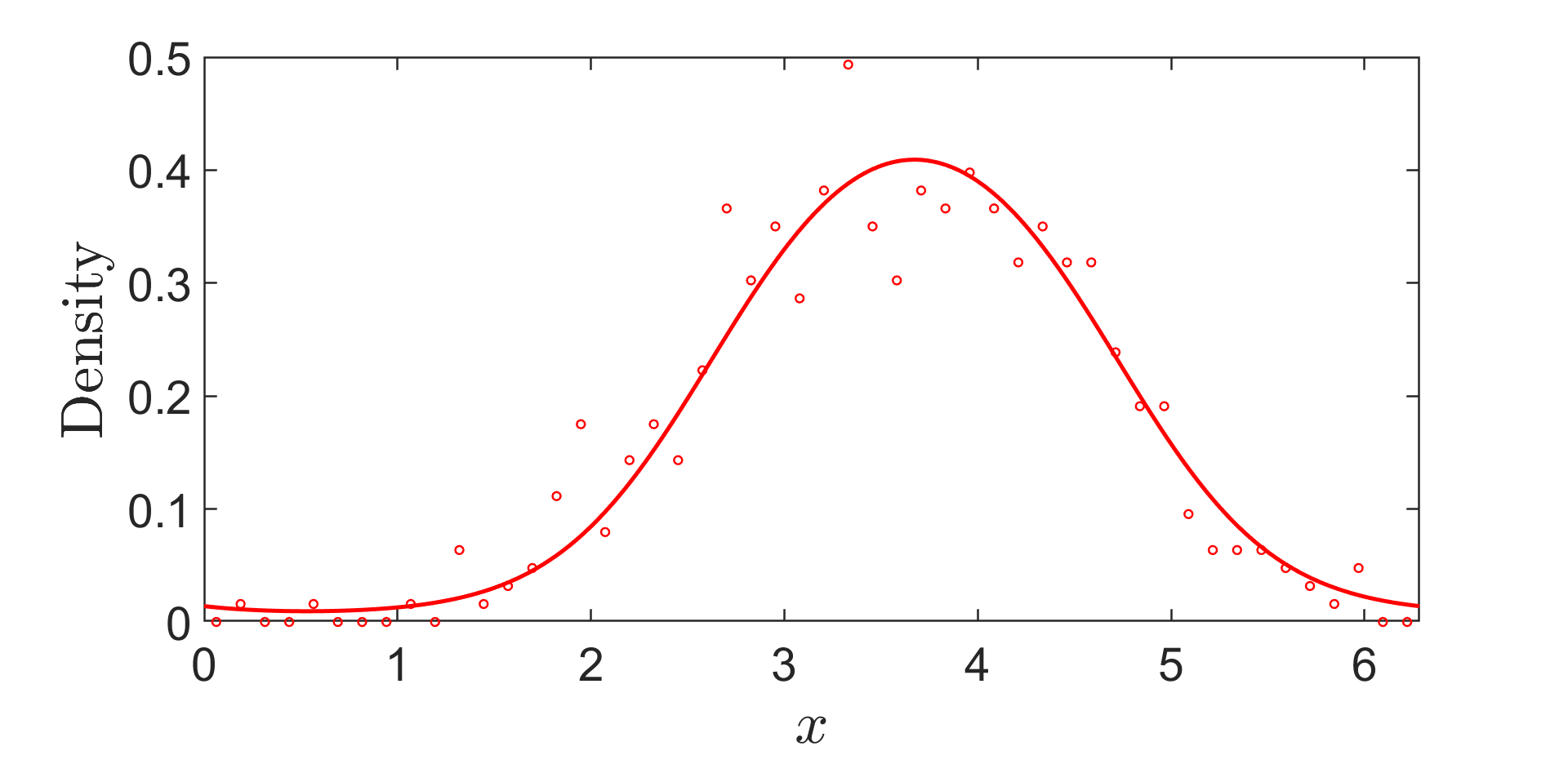}
    \caption{PDE and SDE dynamics for $W(x) = -\cos(x) + \frac{1}{2}\cos(2x), V(x) = 0, \beta = 3$.}
    \label{fig:PDESDEHH}
\end{figure}


In contrast to the examples presented so far, the presence of an external/confining potential breaks translation invariance. Consequently, averaging over many realization of the noise in the particle simulations leads to a non-uniform stationary state. We now consider two such examples, averaging solutions for 500 particles over 10 runs in each case.

For the first example we consider the Brownian mean field model in a magnetic field, \eqref{e:BMF_magnetic}, with $W(x) = -\cos(x)$ and $V(x) = 0.2 \cos(x+\pi)$. 
We integrate up to $T = 1000$.
The results are shown in Figure~\ref{fig:PDESDEVcos}, where we once again have excellent agreement between the PDE and SDE.
\begin{figure}[H]
    \centering
    \includegraphics[width = 0.75\textwidth]{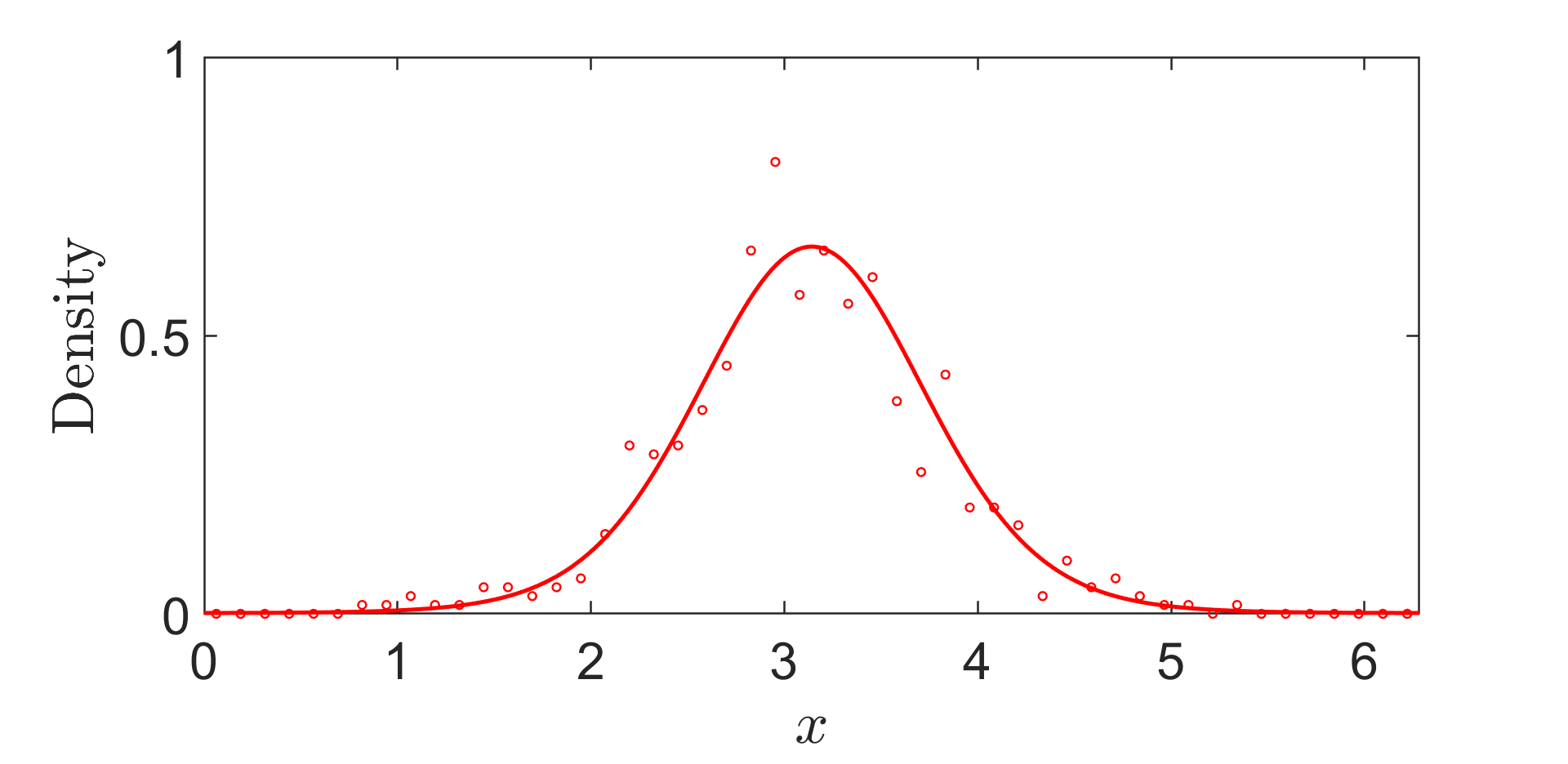}
    \caption{PDE and SDE dynamics for $W(x) = -\cos(x)$ with confining potential $V(x) = 0.2 \cos(x+\pi), \beta = 3$, 500 particles, 10 runs.}
    \label{fig:PDESDEVcos}
\end{figure}

Finally for long time solutions, we consider an example taken from \cite{angeli2023well}. Here $W(x) = -2\cos(2x)$ and $V(x) = \cos(x)$. In this case the dynamics converge more slowly and we take a final time of $T = 7000$. Figure~\ref{fig:PDESDEMichela} demonstrates the excellent agreement between the dynamics at long times.
\begin{figure}[H]
    \centering
    \includegraphics[width=0.75\textwidth]{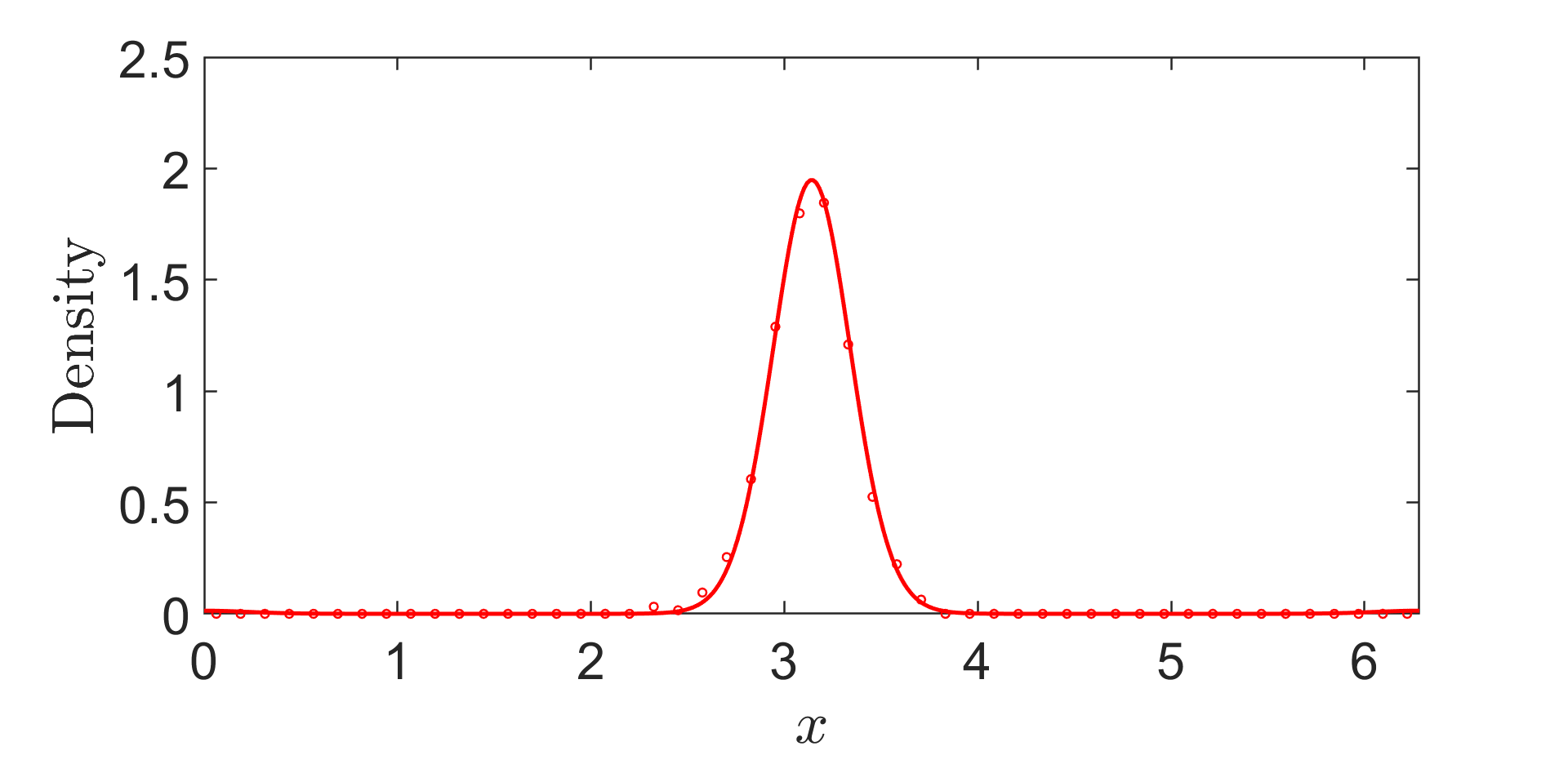}
    \caption{PDE and SDE dynamics for $W(x) = -2\cos(x)$ with confining potenital $V(x) = \cos(x+\pi), \beta = 3, 500$ particles.}
    \label{fig:PDESDEMichela}
\end{figure}

Of course, one is often interested not only in long time dynamics, but also in the full evolution of the density.  In Figure \ref{fig:Snapshots} we compare the PDE and SDE dynamics at a number of times for two different combinations of interaction and external potentials. As previously discussed, the agreement between the two dynamics is better in the right hand plot as we have a confining potential which breaks translation invariance, and a simpler interaction potential.

\begin{figure}[H]
    \centering 
    \begin{minipage}[b]{0.49\textwidth}
        \includegraphics[width=\textwidth]{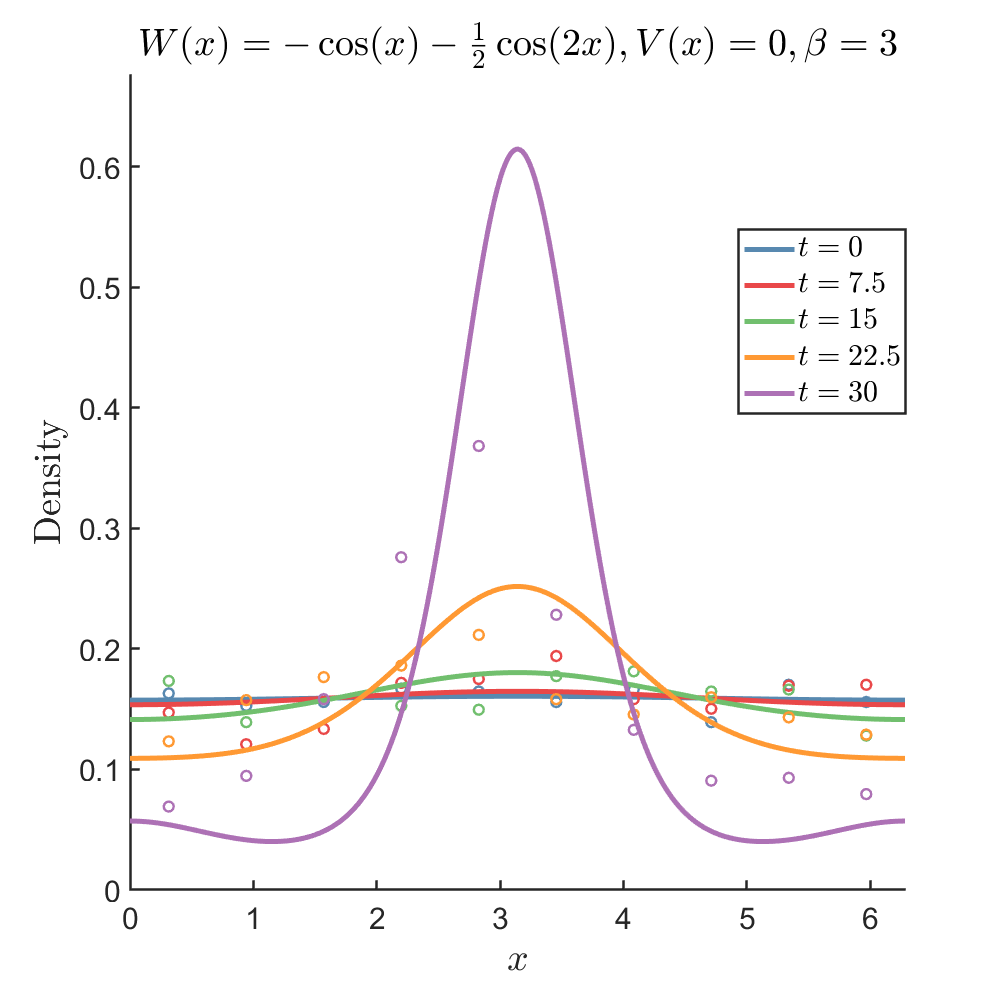}
    \end{minipage}
  \hfill
  \begin{minipage}[b]{0.49\textwidth}
    \includegraphics[width=\textwidth]{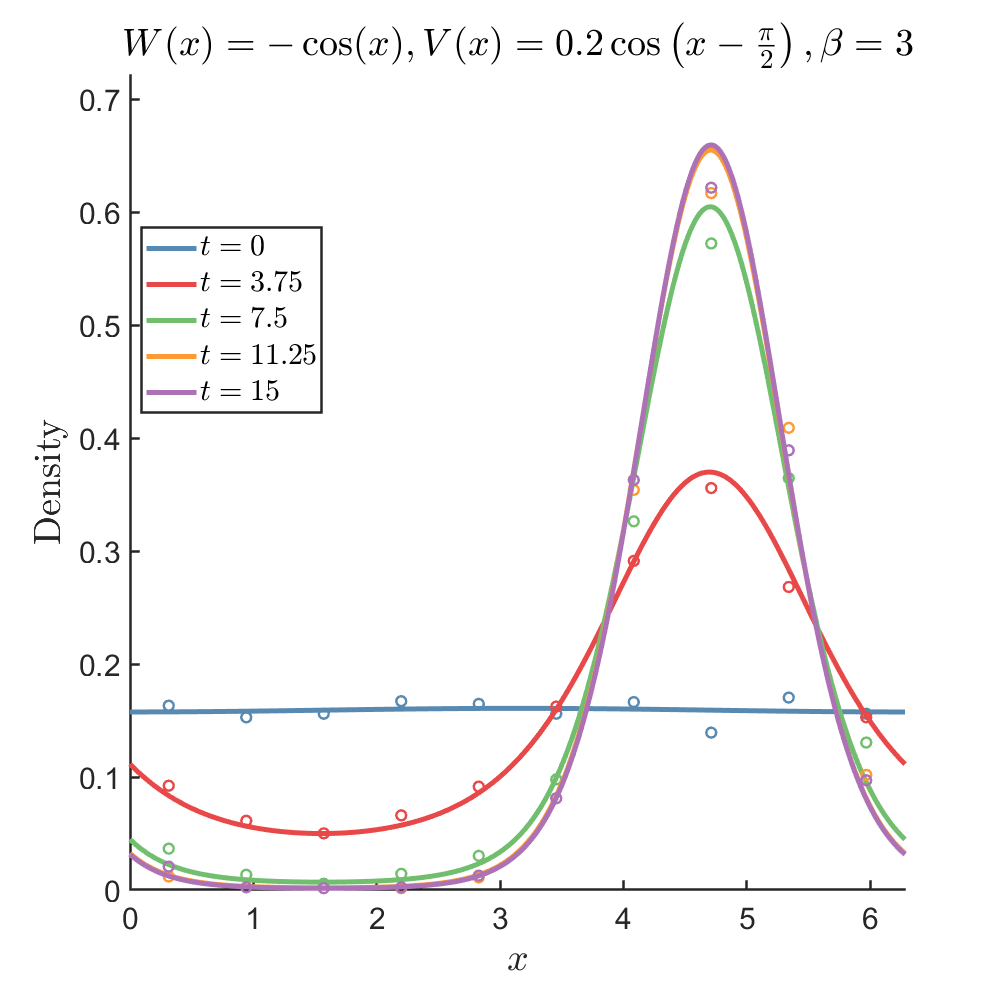}
  \end{minipage}
    \caption{PDE and SDE dynamics for a different range of times, $\beta = 3$, for $W(x) = -\cos(x) - \frac{1}{2}\cos(2x), V(x) = 0$ (left) and for $W(x) = -\cos(x)$, $V(x) = 0.2 \cos \left( x - \frac{\pi}{2} \right)$ (right).}\label{fig:Snapshots}
\end{figure}

\subsection{PDE simulations for the HKB model}
Another model with confining potential is the Haken–Kelso–Bunz (HKB) {which is a well-known system in biophysics} (see Chapter 5, \cite{frank04}). Here $W(x) = -\cos(x)$, $V(x) = - \alpha \cos(x+\pi) - \gamma \cos(2(x+\pi))$. 
{The simulations we present here offer insight into the model which is not completely analytically tractable. Unlike previous models, the HKB model can exhibit either one stable state or two stable states, depending on the system parameters.}
{In Figures~\ref{fig:HKB1} and~\ref{fig:HKB2}} we plot the stationary states of the McKean-Vlasov PDE for different values of $\alpha, \gamma$.
\begin{figure}[H]
    \centering
    \includegraphics[width=0.75\textwidth]{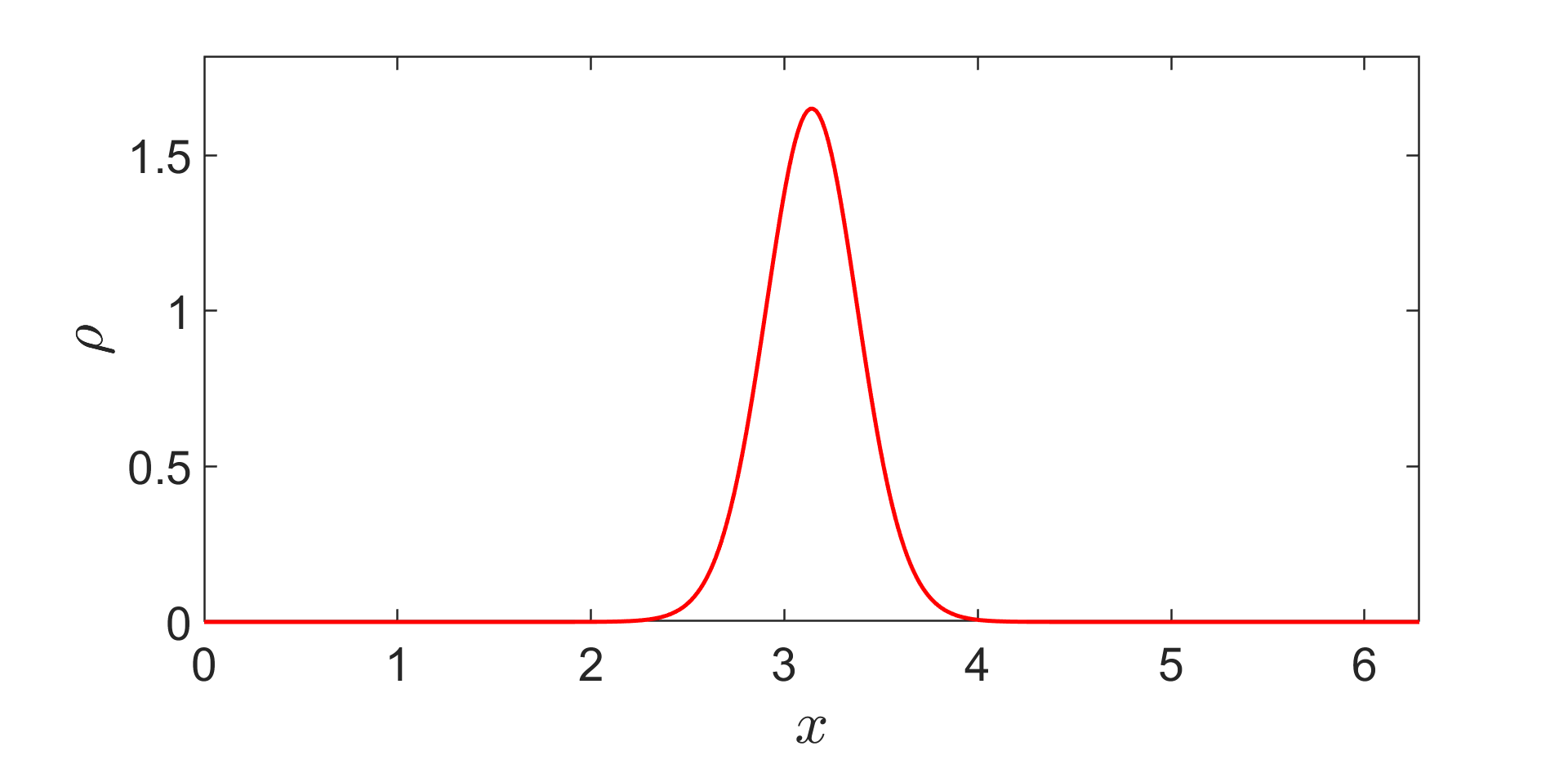}
    \caption{PDE dynamics for the HKB model with $\alpha, \gamma = 1$, $\beta = 3$.}
    \label{fig:HKB1}
\end{figure}

\begin{figure}[H]
    \centering
    \includegraphics[width=0.75\textwidth]{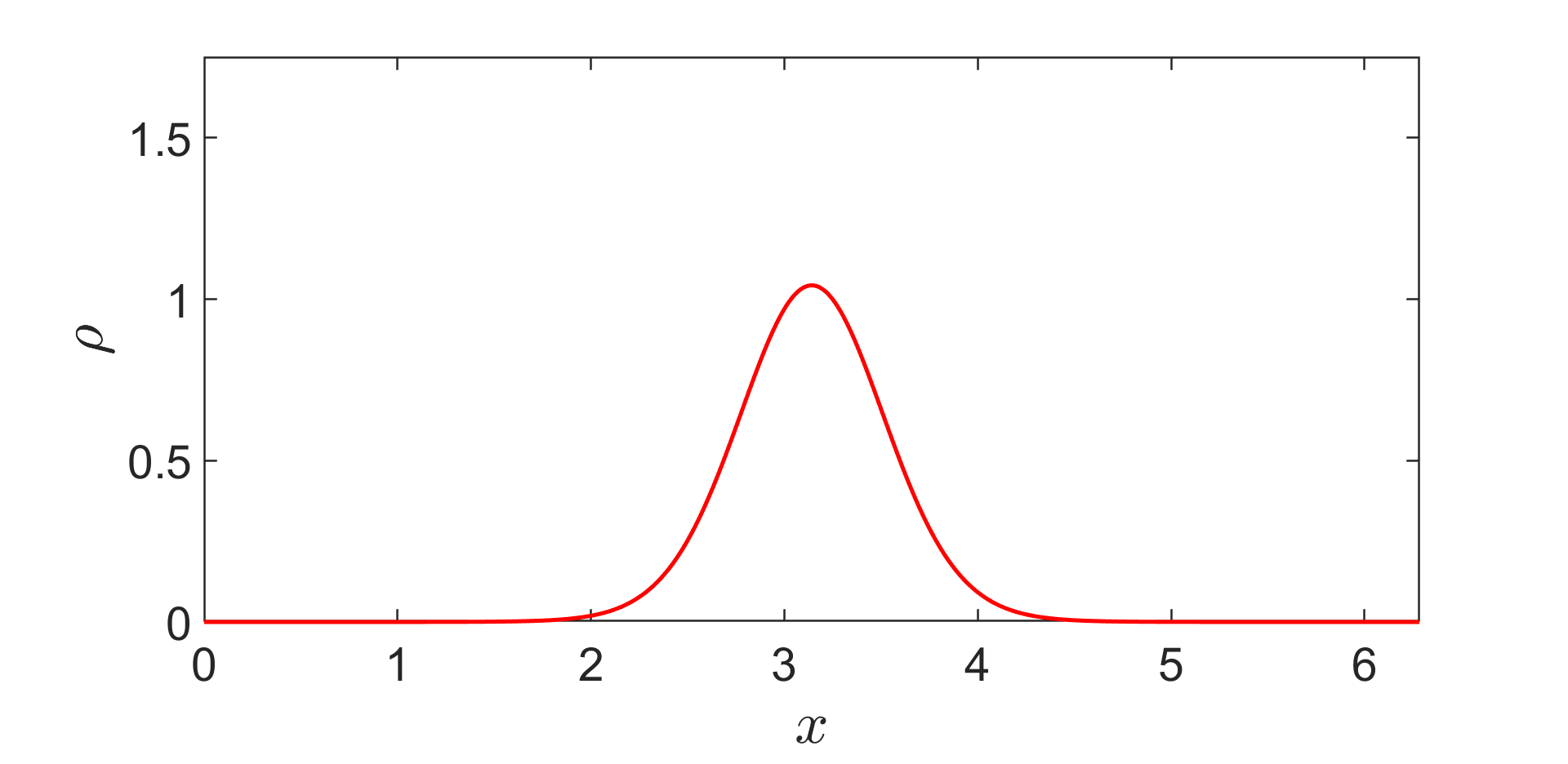}
    \caption{PDE dynamics for the HKB model with $\alpha = 1, \gamma = \frac{1}{8}$, $\beta = 3$.}
    \label{fig:HKB2}
\end{figure}







\section{Conclusions}
\label{sec:conclusions}

In this paper, we studied the stability of steady states for McKean-Vlasov dynamics on the torus, for interaction potentials that contain multiple non-zero Fourier modes. It was shown, through the study of both the self-consistency equations and the free energy of the system and through a linear stability analysis, that there is a critical temperature value at which the constant stationary state of the system becomes unstable. Furthermore, the stability of non-constant stationary states was analysed by means of perturbation theory for higher harmonic and multichromatic interaction potentials. Finally, we verified these results with extensive numerical simulations of both the McKean-Vlasov PDEs and the systems of SDEs involved. 

The work presented in this paper can be extended in several interesting directions. First, we would like to study the effect of inertia on the formation and stability of multipeak solutions by considering the kinetic mean field PDE \cite{GPY2018}. Second, the impact of colored noise on the stability on the phase transitions and on the stability of non-uniform steady states is an important question, motivated by recent work on the modeling of collective organization in cyanobacteria~\cite{PhysRevLett.131.158303}. More generally, we aim at applying our analytical and numerical methodologies to the study of active matter, e.g.~\cite{peruani2008mean}. All these topics are currently under investigation.

\paragraph{Acknowledgments} 
We thank the reviewers for a very careful reading of our paper and for many useful suggestions and comments. 

\paragraph{Funding} 
Imperial College London EPSRC DTP in Mathematical Sciences Grant No. EP/W523872/1 and Mary
Lister McCammon summer research fellowship (to B.B.); ERC-EPSRC Frontier Research Guarantee
through Grant No. EP/X038645, ERC through Advanced Grant No. 247031 and 1 and a Leverhulme
Trust Senior Research Fellowship, SRF\textbackslash R1 \textbackslash 24105 (to G.P.).

\paragraph{Conflicts of interest}
The authors declare that they have no conflicts of interest.

\paragraph{Data availability}
No further data is available.

\printbibliography

\end{document}